\documentclass[conference]{IEEEtran}
\IEEEoverridecommandlockouts
\usepackage{cite}
\usepackage{amsmath,amssymb,amsfonts}
\usepackage{algorithmic}
\usepackage{graphicx}
\usepackage{textcomp}
\usepackage{xcolor}
\def\BibTeX{{\rm B\kern-.05em{\sc i\kern-.025em b}\kern-.08em
    T\kern-.1667em\lower.7ex\hbox{E}\kern-.125emX}}

\usepackage{mathrsfs}
\usepackage{dsfont}
\newcommand{\mc}{\mathcal}
\newcommand{\ms}{\mathscr}
\newcommand{\ox}{\otimes}
\newcommand{\NE}{\operatorname{NE}}
\newcommand{\err}{\varepsilon}
\newcommand{\nc}{\newcommand}
\nc{\tr}{\operatorname{Tr}}
\nc{\NN}{{{\mathbb N}}}
\nc{\rf}{\operatorname{rf}}
\nc{\sce}{\operatorname{sc}}
\nc{\DD}{{{\mathbb D}}}
\nc{\density}{\mathscr{D}}
\nc{\PSD}{\ms{P}}
\nc{\RR}{{{\mathbb R}}}
\nc{\supp}{{\operatorname{supp}}}
\nc{\Renyi}{R\'{e}nyi }
\nc{\Sand}{{\scriptscriptstyle  \rm S}}
\nc{\Petz}{{\scriptscriptstyle  \rm P}}
\nc{\cH}{{\cal H}}
\nc{\sA}{{{\mathscr{A}}}}
\nc{\sB}{{{\mathscr{B}}}}
\nc{\reg}{\infty}
\nc{\<}{\langle}
\nc{\rnc}{\renewcommand}
\rnc{\>}{\rangle}
\nc{\fid}{\mathfrak{f}}

\newtheorem{theorem}{Theorem}
\newtheorem{corollary}[theorem]{Corollary}
\newtheorem{definition}[theorem]{Definition}
\newtheorem{remark}[theorem]{Remark}
\newtheorem{proposition}[theorem]{Proposition}
\newtheorem{example}{Example}
\newtheorem{lemma}[theorem]{Lemma}

\begin{document}

\title{\huge Exponential Analysis for Entanglement Distillation}

\author{
\IEEEauthorblockN{Zhiwen Lin}
\IEEEauthorblockA{\textit{IASM, Harbin Institute of Technology}\\
21b912029@stu.hit.edu.cn\\}
\and
\IEEEauthorblockN{Ke Li}
\IEEEauthorblockA{\textit{IASM, Harbin Institute of Technology}\\
carl.ke.lee@gmail.com\\}
\and
\IEEEauthorblockN{Kun Fang}
\IEEEauthorblockA{kunfang00@outlook.com}
}

\maketitle

\begin{abstract}
Historically, the focus in entanglement distillation has predominantly been on the distillable entanglement, and the framework assumes complete knowledge of the initial state. In this paper, we study the reliability function of entanglement distillation, which specifies the optimal exponent of the decay of the distillation error when the distillation rate is below the distillable entanglement. Furthermore, to capture greater operational significance, we extend the framework from the standard setting of known states to a black-box setting, where distillation is performed from a set of possible states. 
We establish an exact finite blocklength result connecting to composite correlated hypothesis testing without any redundant correction terms. Based on this, the reliability function of entanglement distillation is characterized by the regularized quantum Hoeffding divergence. 
In the special case of a pure initial state, our result reduces to the error exponent for entanglement concentration derived by Hayashi et al. in 2003. Given full prior knowledge of the state, we construct a concrete optimal distillation protocol. Additionally, we analyze the strong converse exponent of entanglement distillation. While all the above results assume the free operations to be non-entangling, we also investigate other free operation classes, including PPT-preserving, dually non-entangling, and dually PPT-preserving operations.
\end{abstract}


\section{Introduction}
Entanglement was initially seen as a peculiar physical phenomenon~\cite{EPR, schrodinger1935, schrodinger1936,trimmer1980present,cat}, but is now regarded as a resource~\cite{horodecki2009quantum,chitambar2019quantum}. It fuels quantum teleportation~\cite{teleportation} and superdense coding~\cite{superdensecoding}, plays a critical role in quantum key distribution~\cite{BB84,Ekert,RennerPhD}, and enhances quantum~\cite{SmithYard,hastings2009superadditivity} and private~\cite{SmithSmolin,li2009private} communications on noisy channels. Since most quantum communication schemes require maximally entangled states, it is essential to purify multiple copies of noisy, weakly entangled states into a smaller number of highly entangled states~\cite{bennett1996concentrating,bennett1996purification,bennett1996mixed}, a process termed \emph{entanglement distillation}. Over the past two decades, significant attention has been focused on the asymptotic distillation rate, leading to breakthroughs under various free operations such as one-way local operations and classical communication~\cite{onewayLOCC}, (asymptotically) non-entangling operations~\cite{Reversible2008,Reversible2010}, and dually non-entangling operations~\cite{Dually}.

The \emph{reliability function} of entanglement distillation is defined as the optimal error exponent. This optimality is considered within a specified class of free operations. It describes how fast the distillation error decays as the number of copies of the initial state increases, for a fixed distillation rate --- that is, the average number of target states that can be produced per initial copy. Hayashi et al.~\cite{hayashi2002error} explored the reliability function of entanglement distillation under local operations and classical communication (LOCC), but their study was limited to pure entangled states and is not applicable to mixed entangled states. Recently, Lami et al.~\cite{lami2024asymptotic} investigated the error exponent under non-entangling operations. However, their analysis fixed the total distillation yield rather than the distillation rate. Informally, their result is that the error exponent is given by the reverse relative entropy of entanglement when the distillation rate approaches zero. 

In this paper, we study the reliability function of entanglement distillation to encompass not only mixed states but also the black-box scenario. In natural operational settings, complete knowledge of the initial state is unattainable due to unknown noise or adversarial effects. This information inaccessibility makes it essential to analyze the performance of entanglement distillation protocols independent of a specific state. 
Within the finite blocklength analysis, we derive an exact relationship between entanglement distillation and composite correlated hypothesis testing. To the best of our knowledge, this may be the first result that requires no additional correction terms. This fine-grained control of the error is crucial for analyzing the error exponent. Hence, we obtain that the reliability function of entanglement distillation is precisely characterized by the regularized quantum Hoeffding divergence. 
When the initial state is pure, our framework yields the same error exponent formula for entanglement distillation as established by Hayashi et al.~\cite{hayashi2002error}. Furthermore, our analysis holds for any fixed entanglement distillation rate below the distillable entanglement. Under the continuity of the regularized Petz R\'{e}nyi relative entropy at $\alpha=1$, it would reproduce the result of Lami et al.~\cite{lami2024asymptotic} by taking the distillation rate to zero. Moreover, we construct an explicit optimal distillation protocol under the condition of complete prior information, which is guaranteed to achieve the reliability function.

In addition, we also establish that the \emph{strong converse exponent} (defined as the optimal exponent for the decay of distillation fidelity when the rate exceeds the distillable entanglement) is lower-bounded by the regularized quantum Hoeffding anti-divergence. When the initial state is known, this bound is optimal under the differentiability of the regularized sandwiched R\'{e}nyi relative entropy for $\alpha\geq 1$. Regarding free operations, we consider not only the well-studied non‑entangling operations, but also other operational classes including PPT-preserving, dually non-entangling, and dually PPT-preserving operations.

\section{Notation and Preliminaries}
\subsection{Notation}
A quantum system $A$ is mathematically represented by a Hilbert space $\mc{H}_A$, with its dimension denoted by $|A|$. The set of positive semi-definite operators on $\mc{H}_A$ is denoted as $\ms{P}(\mc{H}_A)$. Quantum states of system $A$ are modeled as density operators, i.e., positive semi-definite operators on $\mc{H}_A$ with trace one. Let $\ms{D}(\mc{H}_A)$ be the set of density operators on $\mc{H}_A$. Here, we shall consider the bipartite quantum system $AB$ with the Hilbert space $\mc{H}_{AB}:=\mc{H}_A\otimes\mc{H}_B$. A bipartite state $\sigma_{AB}$ on $AB$ is called \emph{separable}~\cite{werner1989quantum} if it can be decomposed as the form $\sigma_{AB}=\sum_{i=1}^N p_i\alpha_A^i\otimes\beta_B^i$,  
where $N\in\mathbb{N}_+$, $p$ is a probability distribution on $\{1,...,N\}$, $\alpha_A^i$ and $\beta_B^i$ are states on systems $A$ and $B$ respectively. Otherwise it is called \emph{entangled}, e.g., the two-qubit maximally entangled state  $\Phi_{2}:=|\Phi_2\rangle\langle\Phi_2|$, where $|\Phi_2\rangle:=\frac{1}{\sqrt{2}}|00\rangle+|11\rangle.$ Let $\ms{F}_n$ denote the set of all separable states on the bipartite Hilbert space $\mathcal{H}_{AB}^{\ox n}$, where separable partition is $A_1...A_n:B_1...B_n$. We denote by $\ms{F}=\{\ms{F}_n\}_{n\in\mathbb{N}}$ the sequence of separable state sets\footnote{The notation $\ms{F}$ is abused to denote sets of states or sequences thereof, depending on context; likewise for $\ms{A},\ms{B}$, and $\ms{R}$ in what follows.}. A state $\rho$ on Hilbert space $\mathcal{H}_{AB}$ is called a \emph{maximally correlated state} if there exist bases $\{|a_i\>\}_i$ and $\{|b_i\>\}_i$ of $\mathcal{H}_A$ and $\mathcal{H}_B$ such that the support of $\rho$ is contained in the subspace spanned by $\{|a_i\>|b_i\>\}_i$. Such a state is denoted by $\rho^{\mathrm{mc}}$. 

A quantum channel $\Lambda$ is a completely positive and trace-preserving (CPTP) linear map acting on quantum states. A quantum measurement is mathematically represented by a set of positive semi-definite operators $\mc{M}=\{M_x\}_x$
satisfying the normalization condition $\sum_xM_x=\mathds{1}$. 

For $n\in\NN$, let $S_n$ be the symmetric group of the permutations of $n$ elements. A state $\rho_{n}\in\mathscr{D}(\mathcal{H}^{\ox n})$ is called permutation invariant if it satisfies $U_{\pi_n}\rho_{n}U_{\pi_n}^*=\rho_{n},\forall\pi_n\in S_n$, where $U_{\pi_n}:|\psi_1\>\ox\cdots\ox|\psi_n\>\mapsto|\psi_{\pi_n^{-1}(1)}\>\ox\cdots\ox|\psi_{\pi_n^{-1}(n)}\>$ is the natural representation of permutation $\pi_n$. The set of all permutation-invariant states on $\mathcal{H}^{\ox n}$ is denoted by $\ms{D}^{\mathrm{inv}}(\mathcal{H}^{\ox n})$.

For state $\rho$ and $\sigma$ on Hilbert space $\mathcal{H}$, the fidelity between them is defined as $F(\rho,\sigma):=\|\sqrt{\rho}\sqrt{\sigma}\|_1^2=\tr\left(\sqrt{\sqrt{\sigma}\rho\sqrt{\sigma}}\right)^2.$
In particular, when one of the states is a pure state $\psi$, the fidelity $F(\rho,\psi)=\tr(\rho\psi).$

\subsection{Quantum Divergences}
A functional $\DD: \density \times \PSD \to \RR$ is called a \emph{quantum divergence} if it satisfies the data-processing inequality: for any quantum channel $\Lambda$ and any $(\rho,\sigma) \in \density \times \PSD$, it holds that $\DD(\Lambda(\rho)\|\Lambda(\sigma)) \leq \DD(\rho\|\sigma)$. In the following, we introduce several quantum divergences that will be used throughout this work. We also define quantum divergences between two sets of quantum states.
\begin{definition}
    For any $(\rho,\sigma) \in \density \times \PSD$ and $\alpha \in (0,1) \cup (1,\infty)$,  the Umegaki relative entropy~\cite{umegaki} is defined by
    \begin{align}
        D(\rho\|\sigma):= \tr[\rho(\log \rho - \log \sigma)],\nonumber
    \end{align}
    the Petz \Renyi divergence~\cite{petz1986quasi} is defined by
    \begin{align}
        D_{\Petz,\alpha}(\rho\|\sigma) := \frac{1}{\alpha-1}\log\tr\left[\rho^\alpha\sigma^{1-\alpha}\right],\nonumber
    \end{align}
     and the sandwiched \Renyi divergence~\cite{muller2013quantum,wilde2014strong} is defined by
    \begin{align}
        D_{\Sand,\alpha}(\rho\|\sigma) := \frac{1}{\alpha-1}\log\tr\left[\sigma^{\frac{1-\alpha}{2\alpha}}\rho\sigma^{\frac{1-\alpha}{2\alpha}}\right]^\alpha,\nonumber
    \end{align}
   if either $\alpha < 1$ and $\rho \not\perp \sigma$ or $\alpha > 1$ and $\supp(\rho) \subseteq \supp(\sigma)$. Otherwise, we set the divergences as $+\infty$.
\end{definition}
\begin{definition}[Quantum divergence between two sets of states]\label{def: divergence between two sets}
    Let $\DD$ be a quantum divergence between two quantum states. Then for any sets $\sA,\sB\subseteq \density(\cH)$, the quantum divergence between these two sets is defined by
    \begin{align}
        \DD(\sA\|\sB):= \inf_{{\rho \in \sA, \sigma \in \sB}} \DD(\rho\|\sigma). \nonumber
    \end{align}
   Let $\sA = \{\sA_n\}_{n\in \NN}$ and $\sB = \{\sB_n\}_{n\in \NN}$ be two sequences of sets of quantum states, where each $\sA_n, \sB_n \subseteq \density(\cH^{\ox n})$. the regularized divergence between these sequences is defined by 
\begin{align}
    \DD^{\reg}(\sA \| \sB) := \lim_{n \to \infty} \frac{1}{n} \DD(\sA_{n} \| \sB_{n}),\nonumber
\end{align}
if the limit exists.
\end{definition}

In many practical scenarios, the sequences of sets under consideration are not arbitrary but possess a structure that is compatible with tensor products. This property, known as \emph{stability} (or closeness) under tensor product, is formalized as follows.
\begin{definition}[Stable sequence]\label{def: closed under tensor product}
Let $\ms{A} \subseteq \ms{P}(\mc{H}_A)$,$\ms{B} \subseteq \ms{P}(\mc{H}_B)$, and $\ms{C} \subseteq \ms{P}(\mc{H}_A\ox \mc{H}_B)$. We say that $(\ms{A},\ms{B},\ms{C})$ is stable under tensor product if, for any $X_A \in \ms{A}$ and $X_B \in \ms{B}$, it holds that $X_A \otimes X_B \in \ms{C}$. In short, we write $\ms{A} \otimes \ms{B} \subseteq \ms{C}$. A sequence of sets $\{\ms{A}_n\}_{n \in \mathbb{N}}$ with $\ms{A}_n \subseteq \ms{P}(\mc{H}^{\ox n})$ is called stable under tensor product if $\ms{A}_n \otimes \ms{A}_m \subseteq \ms{A}_{n+m}$ for all $n, m \in \mathbb{N}$.
\end{definition}

\subsection{Composite Correlated Hypothesis Testing}
We set up a hypothesis testing problem with the composite null hypothesis comprising all separable states $\ms{F}_n$, against the alternative hypothesis which is the set of initial states $\ms{R}_n$. Two types of error can occur. The type-I error is the probability that we incorrectly accept the resource state $\ms{R}_n$ while it is actually the free state $\ms{F}_n$, and the type-II error corresponds to the converse situation. Providing a fixed number $r > 0$ and constraining the type-I error to be at most $2^{-\lfloor rn\rfloor}$, we denote the optimal type-II error by 
\begin{align}
    &\beta_n(\ms{R}_n,2^{-\lfloor rn \rfloor})\nonumber\\
    :=&\min_{\substack{0\leq M_n \leq\mathds{1}\\ \max_{\sigma_n\in\ms{F}_n}\tr((\mathds{1}-M_n)\sigma_n)\leq 2^{-\lfloor rn \rfloor}}}\max_{\rho_n\in\ms{R}_n}\tr M_n\rho_n. \nonumber
\end{align}

\section{Problem Statement and Main Results}
Entanglement distillation involves two separated parties, conventionally named Alice and Bob. In previous studies~\cite{onewayLOCC,Reversible2008,Reversible2010,Dually,lami2024asymptotic}, two parties share multiple copies of a bipartite entangled state $\rho_{AB}$ and aim to convert them into a smaller number of maximally entangled states. However, the initial state is not fully accessible in practical implementations. Therefore, considering the initial resource states as a set $\ms{R}$, it becomes necessary to investigate universal distillation protocols in ``state-agnostic" scenarios. For this conversion, not all quantum channels are available. We restrict the free operations to non-entangling operations, i.e., those that map separable states to separable states, denoted by $\mc{F}_{\NE}:=\{\Lambda\in \mathrm{CPTP}:\Lambda(\sigma)\in \ms{F},\forall \sigma\in\ms{F}\}$. 

\subsection{The Reliability Function}
Let $\ms{R}_n$ be a convex compact set of quantum states on Hilbert space $\mathcal{H}_{AB}^{\ox n}$. When the distillation rate is $r$, which means that the yield is $\lfloor rn\rfloor$ copies of the two-qubit maximally entangled states $\Phi_2^{\otimes \lfloor rn \rfloor}$, the error of entanglement distillation in a black-box setting is defined as
\begin{equation}
    \err(\ms{R}_n\!\!\xrightarrow{\mathcal{F}_{\NE}}\! \Phi_2^{\otimes\lfloor \!rn\! \rfloor})\!\!:=\!\!\min_{\Lambda_n\in\mathcal{F}_{\NE}} \max_{\rho_n\in\ms{R}_n}\!\!(1\!-\!F(\Lambda_n(\rho_n),\!\Phi_2^{\otimes \lfloor \! rn \!\rfloor})\!).\nonumber
\end{equation}

We establish an exact equivalence between the error of entanglement distillation in the black-box setting and the optimal type-II error in the composite correlated hypothesis testing.
\begin{theorem}[Finite blocklength relation]
\label{thm: error in finite blocklenth}
Let $\ms{R}_n$ be a convex compact set of quantum states on Hilbert space $\mathcal{H}_{AB}^{\ox n}$. Let $r>0$ be a real number, and the distillation protocol is restricted to non-entangling operations $\mathcal{F}_{\NE}$. Then the error of entanglement distillation in the block box setting equals the optimal type-II error of hypothesis testing
    \begin{equation}
         \err\left(\ms{R}_n\xrightarrow{\mathcal{F}_{\NE}} \Phi_2^{\otimes \lfloor rn \rfloor}\right)=\beta_n\left(\ms{R}_n,2^{-\lfloor rn \rfloor}\right).
    \end{equation}
\end{theorem}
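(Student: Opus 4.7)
The plan is to establish the two directions of equality by explicit constructions that convert between non-entangling distillation protocols and composite hypothesis tests. The whole argument rests on two elementary facts: (i) since $\Phi_2^{\otimes \lfloor rn\rfloor}$ is pure, $F(\Lambda_n(\rho_n),\Phi_2^{\otimes\lfloor rn\rfloor}) = \tr\bigl(\Phi_2^{\otimes\lfloor rn\rfloor}\Lambda_n(\rho_n)\bigr)$; and (ii) the maximum singlet fraction of separable states on a $d\times d$ bipartite system satisfies $F(\sigma,\Phi_d)\leq 1/d$, which for $d = 2^{\lfloor rn\rfloor}$ matches the type-I budget $2^{-\lfloor rn\rfloor}$ exactly.

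For the direction $\beta_n(\ms{R}_n,2^{-\lfloor rn\rfloor}) \leq \err(\ms{R}_n \xrightarrow{\mathcal{F}_{\NE}} \Phi_2^{\otimes\lfloor rn\rfloor})$, I take any $\Lambda_n \in \mathcal{F}_{\NE}$ and set $M_n := \Lambda_n^{*}\bigl(\mathds{1}-\Phi_2^{\otimes\lfloor rn\rfloor}\bigr)$, where $\Lambda_n^{*}$ is the Heisenberg adjoint. Positivity and unitality of $\Lambda_n^{*}$ yield $0\leq M_n\leq \mathds{1}$. For any $\sigma_n\in\ms{F}_n$, the non-entangling property places $\Lambda_n(\sigma_n)$ back in $\ms{F}$, so by fact~(ii), $\tr((\mathds{1}-M_n)\sigma_n) = F(\Lambda_n(\sigma_n),\Phi_2^{\otimes\lfloor rn\rfloor}) \leq 2^{-\lfloor rn\rfloor}$, verifying the type-I constraint. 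Moreover $\tr(M_n\rho_n) = 1 - F(\Lambda_n(\rho_n),\Phi_2^{\otimes\lfloor rn\rfloor})$ for every $\rho_n$, so the two max-min objectives coincide and minimizing over $\Lambda_n$ gives the desired inequality.

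For the reverse inequality, given a feasible test $M_n$ I would build the channel
\begin{equation*}
\Lambda_n(\rho) := \tr\bigl((\mathds{1}-M_n)\rho\bigr)\,\Phi_2^{\otimes\lfloor rn\rfloor} + \tr(M_n\rho)\,\omega_n, \qquad \omega_n := \frac{\mathds{1}-\Phi_2^{\otimes\lfloor rn\rfloor}}{d^2-1},
\end{equation*}
where $d = 2^{\lfloor rn\rfloor}$. Complete positivity and trace preservation are immediate from $0\leq M_n\leq \mathds{1}$, and since $\tr\bigl(\omega_n\Phi_2^{\otimes\lfloor rn\rfloor}\bigr) = 0$, one recovers $1 - F\bigl(\Lambda_n(\rho_n),\Phi_2^{\otimes\lfloor rn\rfloor}\bigr) = \tr(M_n\rho_n)$, matching the type-II objective. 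The main obstacle is to verify $\Lambda_n \in \mathcal{F}_{\NE}$: for any separable $\sigma$, the output is the isotropic state $p\,\Phi_2^{\otimes\lfloor rn\rfloor} + (1-p)\omega_n$ with $p = \tr((\mathds{1}-M_n)\sigma) \leq 1/d$ forced by the type-I constraint, and the Horodecki separability criterion for isotropic states says that such a state is separable precisely when its singlet fraction is at most $1/d$. The specific choice of $\omega_n$ is what couples the type-I budget $2^{-\lfloor rn\rfloor}$ to the isotropic-state separability threshold, and it is exactly this alignment that produces the clean equality without any residual correction terms.
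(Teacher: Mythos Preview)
Your proposal is correct and follows essentially the same approach as the paper: both directions use the identical constructions (the measure-and-prepare channel with isotropic-state output for $\err\leq\beta_n$, and $M_n=\Lambda_n^{*}(\mathds{1}-\Phi_2^{\otimes\lfloor rn\rfloor})$ for $\beta_n\leq\err$), and both hinge on the isotropic-state separability threshold matching the type-I budget. The only cosmetic difference is that for the type-I constraint you invoke the singlet-fraction bound $F(\sigma,\Phi_d)\leq 1/d$ for separable $\sigma$ directly, whereas the paper derives this same inequality on the spot via the partial-transpose/swap-operator computation $\tr(\Phi_d\,\sigma)=d^{-1}\tr(W\,\sigma^\Gamma)\leq d^{-1}$.
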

\begin{proof}
\textit{1). Proof of the upper bound:}
    Construct a distillation protocol,
    \begin{align}
    \label{eq: construction of distillation protocol in black-box}
        \Lambda_n(X)\!=\!\tr(\!(\mathds{1}-M_n)X)\Phi_2^{\ox \lfloor\! rn\!\rfloor}\!+\!\tr(M_nX)\frac{\mathds{1}-\Phi_{2}^{\otimes \lfloor \!rn\!\rfloor}}{(2^{\lfloor\! rn\!\rfloor})^2\!-\!1},
    \end{align}
    where $\{\mathds{1}-M_n,~M_n\}$ is the measurement that achieves the optimal type-II error $\beta_{n}(\ms{R}_n,2^{-\lfloor rn\rfloor})$.
    
    First, prove that $\Lambda_n$ constructed above belongs to non-entangling channels. For any $\sigma_n\in\ms{F}_n$, 
    \begin{align}
        \Lambda_n(\sigma_n)\!=\!\operatorname{Tr}(\!(\mathds{1}-M_n)\sigma_n)\Phi_{2}^{\otimes \lfloor \!rn\!\rfloor}\!\!+\!\!\operatorname{Tr}(M_n\sigma_n)\frac{\mathds{1}-\Phi_{2}^{\otimes \lfloor\! rn\!\rfloor}}{(2^{\lfloor\! rn\!\rfloor})^2-1}.\nonumber
    \end{align}
    It is known that the isotropic state
    \begin{align} 
    p\Phi_{2}^{\otimes m}+(1-p)\frac{\mathds{1}-\Phi_{2}^{\otimes m}}{(2^m)^2-1}\nonumber
    \end{align} 
    is separable for $p\in[0,2^{-m}]$~\cite[page.~319]{watrous2018theory}. Due to $\tr ( (\mathds{1}-M_n)\sigma_n)\leq2^{-\lfloor rn\rfloor}$, we obtain $\Lambda_n\in \mathcal{F}_{\NE}$.

    Then, we verify the figure of merit of distillation. For any $\rho_n\in\ms{R}_n,$
    \begin{align}
        &1-F(\Lambda_n(\rho_n),\Phi_{2}^{\otimes \lfloor rn\rfloor})\nonumber\\
        =&1-\tr(\Lambda_n(\rho_n)\Phi_{2}^{\otimes \lfloor rn\rfloor})\nonumber\\
        =&\tr(M_n\rho_n)\nonumber\\
        \leq&\max_{\rho_n\in\ms{R}_n}\tr(M_n\rho_n),\nonumber
    \end{align}
    where the first equality follows that $\Phi_{2}^{\otimes \lfloor rn\rfloor}$ is a pure state, and the second equality results from a direct computation upon substituting $\Lambda_n(\rho_n)$ according to Eq.~\eqref{eq: construction of distillation protocol in black-box}. 
    Based on the above, due to definition $	\err(\ms{R}_n\!\!\xrightarrow{\mathcal{F}_{\NE}}\! \Phi_2^{\otimes\lfloor \!rn\! \rfloor})$, we get $\err(\ms{R}_n\!\!\xrightarrow{\mathcal{F}_{\NE}}\! \Phi_2^{\otimes\lfloor \!rn\! \rfloor}) \leq \beta_n(\ms{R}_n,2^{-\lfloor rn \rfloor})$.

\textit{2). Proof of the lower bound:} Let $\Lambda_n$ be the optimal distillation operation for $\err(\ms{R}_n\!\!\xrightarrow{\mathcal{F}_{\NE}}\! \Phi_2^{\otimes\lfloor \!rn\! \rfloor})$.
    Construct a measurement,
    \begin{align}
        \label{eq: construction of measurement in black-box}
        \left\{M_n,\ \mathds{1}-M_n\right\}:=\left\{\mathds{1}-\Lambda_n^\dagger\big(\Phi_2^{\otimes \lfloor rn\rfloor}\big),\ \Lambda_n^\dagger\big(\Phi_2^{\otimes \lfloor rn\rfloor}\big)\right\}.
    \end{align}
    On the one hand, for any $\sigma_n \in \ms{F}_n$, 
    \begin{align}
        \tr\left((\mathds{1}-M_n)\sigma_n\right)&=\tr\left(\Lambda_n^\dagger(\Phi_2^{\otimes \lfloor rn\rfloor})\sigma_n\right)\nonumber\\
        & =\tr\left(\Phi_2^{\otimes \lfloor rn\rfloor}\Lambda_n(\sigma_n)\right)\nonumber\\
        & =\tr\left((\Phi_2^{\otimes \lfloor rn\rfloor})^\Gamma\Lambda_n(\sigma_n)^\Gamma\right)\nonumber\\
        & =\tr\left(\frac{1}{2^{\lfloor rn\rfloor}}W\Lambda_n(\sigma_n)^\Gamma\right)\label{eq: modification on swap}\\
        & \leq\tr\left(\frac{1}{2^{\lfloor rn\rfloor}}\Lambda_n(\sigma_n)^\Gamma\right)\nonumber\\
        & =2^{-\lfloor rn\rfloor},\nonumber
    \end{align}
    where $\Gamma$ is the partial transpose on Hilbert space $\mathcal{H}_B^{\ox n}$, and $W$ is the swap operator on the Hilbert space $\mathds{C}_2^{\ox \lfloor rn \rfloor}\ox \mathds{C}_2^{\ox \lfloor rn \rfloor}$. The inequality is because $W\leq\mathds{1}$ and $\Lambda_n(\sigma_n)^\Gamma$ is separable since $\Lambda_n$ is a non-entangling channel, and hence is positive semi-definite. The final equality also follows from the fact that $\Lambda_n(\sigma_n)^\Gamma$ is a separable state. 

    On the other hand, for any $\rho_n\in\ms{R}_n,$
    \begin{align}
        \tr(M_n\rho_n)&=\operatorname{Tr}((\mathds{1}-\Lambda_n^\dagger(\Phi_2^{\otimes \lfloor rn \rfloor}))\rho_n)\nonumber\\
        &=\operatorname{Tr}((\mathds{1}-\Phi_2^{\otimes \lfloor rn \rfloor})\Lambda_n(\rho_n))\nonumber\\
        &=1-\mathrm{F}(\Phi_2^{\otimes \lfloor rn \rfloor},\Lambda_n(\rho_n))\nonumber\\
        & \leq\max_{\rho_n\in\mathcal{R}_n}\left(1-F(\Lambda_n(\rho_n),\Phi_2^{\otimes \lfloor rn \rfloor})\right),\nonumber
    \end{align}
    Due to definition $	\beta_n(\ms{R}_n,2^{-\lfloor rn \rfloor})$, we get $\err(\ms{R}_n\!\!\xrightarrow{\mathcal{F}_{\NE}}\! \Phi_2^{\otimes\lfloor \!rn\! \rfloor}) \geq \beta_n(\ms{R}_n,2^{-\lfloor rn \rfloor})$.\\
\end{proof}

We now turn to the asymptotic behavior of entanglement distillation. Let $\ms{R}=\{\ms{R}_n\}_{n\in\NN}$ be a stable sequences of convex compact sets of quantum states, where $\ms{R}_n\subseteq\ms{D}(\mathcal{H}_{AB}^{\ox n})$. When the distillation rate $r$ is below the distillable entanglement, the error of entanglement distillation converges exponentially to $0$. The reliability function of entanglement distillation is defined as the rate of such exponential decreasing,
\begin{equation}
    E^{\rf}(\ms{R},\mc{F}_{\NE},r):=\lim_{n\to\infty}-\frac{1}{n}\log\err\left(\ms{R}_n\xrightarrow{\mathcal{F}_{\NE}} \Phi_2^{\otimes \lfloor rn \rfloor}\right).
\end{equation}

Based on the quantum Hoeffding bound for composite correlated hypotheses established in~\cite{fang2025errorexponentsquantumstate}, we immediately obtain the reliability function for entanglement distillation in the black-box setting.
\begin{theorem}[Reliability function for entanglement distillation]
    Let $\ms{R}=\{\ms{R}_n\}_{n\in\NN}$ be a stable sequences of convex compact sets of quantum states with each $\ms{R}_n\subseteq\density(\mathcal{H}_{AB}^{\ox n})$. Let $0<r<D^{\infty}(\mathcal{R}\|\mathcal{F})$ be a real number, and the distillation protocol is restricted to non-entangling operations $\mathcal{F}_{\NE}$. Then the reliability function of entanglement distillation in the black-box setting is precisely given by the regularized quantum Hoeffding divergences,
    \begin{align}
    E^{\rf}(\ms{R},\mc{F}_{\NE},r)=H_r^\infty(\ms{R}\|\ms{F}),
    \end{align}
    where the RHS follows Definition~\ref{def: divergence between two sets} and is induced by the Hoeffding divergence for two states
    \begin{align}
        H_{n,r}(\rho_n\|\sigma_n):=\sup_{\alpha\in(0,1)}\frac{\alpha-1}{\alpha}(nr-D_{\Petz,\alpha}(\rho_n\|\sigma_n)).
    \end{align}
\end{theorem}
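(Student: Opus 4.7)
The plan is to combine Theorem~\ref{thm: error in finite blocklenth} with the quantum Hoeffding bound for composite correlated hypothesis testing established in~\cite{fang2025errorexponentsquantumstate}. The finite blocklength relation provides, at each $n$, the \emph{exact} identity
\begin{align}
\err\bigl(\ms{R}_n\xrightarrow{\mathcal{F}_{\NE}}\Phi_2^{\otimes\lfloor rn\rfloor}\bigr)=\beta_n\bigl(\ms{R}_n,2^{-\lfloor rn\rfloor}\bigr),\nonumber
\end{align}
which is the crucial structural input, since it converts the asymptotic analysis of the distillation error into that of a hypothesis test \emph{without any additional correction terms}. Taking $-\tfrac{1}{n}\log$ of both sides and passing to the limit therefore identifies $E^{\rf}(\ms{R},\mc{F}_{\NE},r)$ with the Hoeffding-type exponent of the composite test of $\ms{F}_n$ against $\ms{R}_n$ with type-I error constraint $2^{-\lfloor rn\rfloor}$.

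To evaluate that exponent, I would directly invoke the composite Hoeffding bound of~\cite{fang2025errorexponentsquantumstate}, which states that for stable sequences of convex compact sets of states and any rate $r$ strictly below the regularized Umegaki relative entropy $D^{\infty}(\ms{R}\|\ms{F})$, the optimal type-II error under the above type-I constraint satisfies
\begin{align}
\lim_{n\to\infty}-\frac{1}{n}\log\beta_n(\ms{R}_n,2^{-\lfloor rn\rfloor})=H_r^\infty(\ms{R}\|\ms{F}).\nonumber
\end{align}
The hypotheses of this bound match our setting verbatim: $\ms{R}$ is convex, compact, and stable by assumption, while the separable-state sequence $\ms{F}=\{\ms{F}_n\}_{n\in\NN}$ under the bipartition $A_1\cdots A_n\!:\!B_1\cdots B_n$ is well known to be convex, compact, and stable under tensor product. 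Chaining the finite blocklength identity with this Hoeffding formula then yields the theorem in one step.

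The plan is therefore essentially a reduction, with the conceptual content packaged inside the finite blocklength identity and the cited Hoeffding bound; what remains is bookkeeping rather than a genuine obstacle. The step I would verify most carefully is the compatibility of the two frameworks: that the null/alternative conventions of~\cite{fang2025errorexponentsquantumstate} indeed correspond to placing $\ms{F}_n$ as the null hypothesis, so that the resulting exponent is $H_r^\infty(\ms{R}\|\ms{F})$ induced by $D_{\Petz,\alpha}(\ms{R}\|\ms{F})$ rather than its dual, and that the range $0<r<D^\infty(\ms{R}\|\ms{F})$ is precisely the regime in which the Hoeffding exponent is finite, strictly positive, and attained by the variational formula over $\alpha\in(0,1)$. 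Once these compatibility checks are in place, the claimed equality follows immediately.
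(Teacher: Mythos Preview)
Your approach is essentially the same as the paper's: combine the finite blocklength identity of Theorem~\ref{thm: error in finite blocklenth} with the composite Hoeffding bound of~\cite[Theorem~5.1]{fang2025errorexponentsquantumstate}. The one technical point you gloss over, and which the paper singles out as the only thing requiring care, is that the reference constrains the type-I error to $2^{-rn}$, not $2^{-\lfloor rn\rfloor}$. The paper handles this discrepancy explicitly: since $\lfloor rn\rfloor\leq rn$ one has $\beta_n(\ms{R}_n,2^{-\lfloor rn\rfloor})\leq\beta_n(\ms{R}_n,2^{-rn})$, which immediately gives $E^{\rf}\geq H_r^\infty(\ms{R}\|\ms{F})$; for the converse one cannot simply cite the limit statement but must re-enter the proof of the upper bound in Section~5.2 of~\cite{fang2025errorexponentsquantumstate}, relax the constraint to $\tr\sigma^{(n)}T_n\leq e^{-(nr-1)}$ via $nr-1\leq\lfloor rn\rfloor$, and verify that the additional constant disappears in the limit. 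This is indeed bookkeeping, but it is the piece of bookkeeping actually needed; the null/alternative convention and the finiteness of the Hoeffding exponent that you flag instead are not issues here.
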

\begin{proof}
This result is a direct consequence of Theorem~\ref{thm: error in finite blocklenth} and~\cite[Theorem 5.1]{fang2025errorexponentsquantumstate}.
    Although the optimal type-I error in~\cite{fang2025errorexponentsquantumstate} is constrained to $2^{-rn}$, while in this work it is $2^{-\lfloor rn\rfloor}$, this difference does not affect the asymptotic result. Since $\lfloor rn\rfloor\leq rn$, it naturally follows that $E^{\rf}(\ms{R},\mc{F}_{\NE},r)\geq H_r^\infty(\ms{R}\|\ms{F})$. The proof of the upper bound follows the proof method in Section 5.2 of~\cite{fang2025errorexponentsquantumstate}. In particular, since $nr-1\leq\lfloor rn\rfloor$, let $\tr\sigma^{(n)}T_n\leq e^{-(nr-1)}$. Its Equations (154) and (155) are adjusted to
    \begin{align}
        &\limsup_{n\to\infty}-\frac{1}{n}\log [\operatorname{Tr}e^{-nR}\rho^{(n)}(I-T_n)+\operatorname{Tr}\sigma^{(n)}T_n] \nonumber\\
        \geq&\limsup_{n\to\infty}-\frac{1}{n}\log[\operatorname{Tr}e^{-nR}\rho^{(n)}(I-T_n)+e^{-(nr-1)}] \nonumber\\
        =&\min\left\{r,R+\lim_{n\to\infty}\sup_{n\to\infty}-\frac{1}{n}\log\mathrm{Tr~}\rho^{(n)}(I-T_n)\right\},\nonumber
    \end{align}
    where $\sigma^{(n)}$ and $\rho^{(n)}$ are constructed in Lemma 5.2 of ~\cite{fang2025errorexponentsquantumstate}.
    The rest remains unchanged. 
    
\end{proof}
\begin{remark}
    \label{rem: convert the perspective to rate}
    Examining the proof of Theorem~\ref{thm: error in finite blocklenth},  we observe that the distillation rate is associated with the Type-I error in hypothesis testing, while the distillation error is linked to the Type-II error. Furthermore, in the context of Hoeffding's hypothesis testing framework, the Type-I error and the Type-II error are symmetric. Consequently, by restricting the Type-II error to $2^{-rn}$ and minimizing the Type-I error, the latter will converge exponentially at rate $H_r^\infty(\ms{F}\|\ms{R})$. Analogous to Remark~\ref{rem: r to 0} below, let the parameter $r$ decay as $O(1/n)$, then it immediately follows that the distillable entanglement in the black-box setting is bounded by the regularized relative entropy of composite entanglement $D^\infty(\ms{R}\|\ms{F})$.
    Provided that the regularized Petz \Renyi relative entropy $D^\infty_{\Petz,\alpha}(\ms{R}\|\ms{F})$ exhibits continuity at $\alpha=1$, the distillation rate is characterized by $D^\infty(\ms{R}\|\ms{F})$.
\end{remark}

Given complete prior knowledge of the initial state---that is, when distillation starts from a precisely known quantum state $\rho_{AB}$---setting $\ms{R}_n=\rho^{\ox n}$ immediately yields a corollary.
\begin{corollary} 
\label{cor: the reliability function of single point}
Let $\rho$ be an entangled state on Hilbert space $\mathcal{H}_{AB}$. Let $0<r<D^{\infty}(\rho\|\mathcal{F})$ be a real number, and the distillation protocol is restricted to non-entangling operations $\mathcal{F}_{\NE}$. Then the reliability function of entanglement distillation  with complete information about the initial state 
is the regularized quantum Hoeffding divergences, 
    \begin{equation}
    \label{eq: reliablity function of single point}
        E^{\rf}(\rho,\mc{F}_{\NE},r)=H_r^\infty(\rho\|\mathcal{F}).
    \end{equation}
\end{corollary}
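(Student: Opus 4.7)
The plan is to derive the corollary as a direct specialization of the reliability function theorem above. I choose the constant sequence $\ms{R}_n := \{\rho^{\otimes n}\}$ and verify that the hypotheses of that theorem are satisfied. Each $\ms{R}_n$ is a singleton in $\density(\mathcal{H}_{AB}^{\otimes n})$, hence trivially both convex and compact. Stability under tensor product is immediate from $\rho^{\otimes n} \otimes \rho^{\otimes m} = \rho^{\otimes (n+m)} \in \ms{R}_{n+m}$. The rate assumption is also compatible: unfolding Definition~\ref{def: divergence between two sets} for a single-state set collapses the outer infimum over $\ms{R}_n$ to evaluation at $\rho^{\otimes n}$, so $D^{\infty}(\ms{R}\|\ms{F}) = D^{\infty}(\rho\|\ms{F})$, and the condition $0 < r < D^{\infty}(\rho\|\ms{F})$ supplies exactly the premise the theorem requires.

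Once the hypotheses are in place, the reliability function theorem yields $E^{\rf}(\ms{R}, \mc{F}_{\NE}, r) = H_r^{\infty}(\ms{R}\|\ms{F})$, and it remains only to match both sides with the statement of the corollary. On the left, the outer maximum over $\rho_n \in \ms{R}_n$ in the definition of $\err(\ms{R}_n \xrightarrow{\mc{F}_{\NE}} \Phi_2^{\otimes \lfloor rn \rfloor})$ trivializes to evaluation at $\rho^{\otimes n}$, reducing the expression to the standard distillation error with a fully known initial state and hence to $E^{\rf}(\rho, \mc{F}_{\NE}, r)$. On the right, the same specialization of Definition~\ref{def: divergence between two sets} gives $H_r^{\infty}(\ms{R}\|\ms{F}) = H_r^{\infty}(\rho\|\ms{F})$ via $H_{n,r}(\rho^{\otimes n}\|\ms{F}_n) = \inf_{\sigma_n \in \ms{F}_n} H_{n,r}(\rho^{\otimes n}\|\sigma_n)$. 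Since the reduction is a purely notational specialization, I anticipate no genuine obstacle; the substantive content is carried entirely by the preceding reliability theorem, and the corollary is essentially a transcription of that result to the single-state setting.
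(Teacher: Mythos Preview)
Your proposal is correct and matches the paper's own approach exactly: the paper simply states that setting $\ms{R}_n=\rho^{\otimes n}$ in the reliability function theorem immediately yields the corollary, and you have spelled out the routine verifications (convexity, compactness, stability, matching of the rate condition and of both sides) that justify this specialization.
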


\begin{remark}
    When the initial state is a maximally correlated state $\rho^{\mathrm{mc}}$, due to the additivity of the Petz \Renyi relative entropy with respect to the set of separable states~\cite{Zhu_2017}, the reliability function of entanglement distillation has a single-letter expression,
    \begin{align}
        E^{\rf}(\rho^{\mathrm{mc}},\mc{F}_{\NE},r)\!=\!\sup_{\alpha\in(0,1)}\frac{\alpha-1}{\alpha}(r-D_{\Petz,\alpha}(\rho^{\mathrm{mc}}\|\ms{F}_1)).
    \end{align}
    Note that maximally correlated states include all pure states. Moreover, the errors of entanglement distillation for non-entangling and LOCC operations coincide in the distillation of pure entangled states~\cite{regula2019one}. By leveraging an alternative expression of the Petz \Renyi relative entropy of entanglement~\cite[Corollary 2]{Zhu_2017},  we can reproduce the result of Hayashi et al.~\cite{hayashi2002error}. 

    More specifically,
    let the distillation begin with a pure state $\psi$, whose Schmidt decomposition is given by $|\psi\>=\sum_{i=1}^{d}\sqrt{\lambda_i}|i\>|i\>$. 
    Note that the setting of Hayashi et al.~\cite{hayashi2002error} differs from ours in that they define the error exponent as $r$ and the distillation rate $E(r)$ as a function of the error exponent $r$. Thus, our result yields 
    \begin{align}
        E(r)=&\sup_{\alpha\in(0,1)}\frac{\alpha-1}{\alpha}(r-D_{\Petz,\alpha}(\ms{F}_1\|\psi))\nonumber\\
        =&\sup_{\alpha\in(0,1)}(-\frac{\alpha}{1-\alpha}r+D_{\Petz,\alpha}(\psi\|\ms{F}_1))\nonumber\\
        =&\sup_{\alpha\in(0,1)}(-\frac{\alpha}{1-\alpha}r+\frac{\alpha}{\alpha-1}\log(\sum_{i}{\lambda_i}^{\frac{1}{\alpha}}))\nonumber\\
        =&\sup_{s\geq1}\frac{1}{1-s}(r+\log\sum_i\lambda_i^s)\label{eq: reproduce Hayashi}.
    \end{align}
    The first equality holds because of the symmetry between type-I and type-II errors in the Hoeffding setting as pointed out in Remark~\ref{rem: convert the perspective to rate}. The second equality follows by substituting $1-\alpha$ for $\alpha$. The third equality employs an alternative expression for the Petz \Renyi relative entropy of entanglement $D_{\Petz,\alpha}(\psi\|\ms{F}_1)=\frac{\alpha}{\alpha-1}\log(\sum_i\lambda_i^{\frac{1}{\alpha}})$ for $\alpha>0$~\cite[Corollary 2]{Zhu_2017}. The last equality is obtained via the substitution of $\alpha$ by $\frac{1}{s}$. Equation~\eqref{eq: reproduce Hayashi} is precisely the result presented by Hayashi et al.~\cite{hayashi2002error} in their equation (116).
\end{remark}

\begin{remark}
    \label{rem: r to 0}
    The framework of~\cite{lami2024asymptotic} employs non-entangling operations to distill entanglement from a state $\rho_{AB}$, establishing that the error exponent in the vanishing-rate (i.e., $r\to 0$) is the reverse relative entropy of entanglement $D(\ms{F}_1\|\rho)$. 
    Taking the limit as $r\to 0$ in our result immediately yields that the error exponent is upper bounded by $D(\ms{F}_1\|\rho)$. Moreover, provided that the continuity of the regularized Petz \Renyi divergence holds at $\alpha = 1$, i.e., $\sup_{\alpha\in (0,1)}D^{\infty}_{\Petz,\alpha}(\ms{F}\|\rho)=D^{\infty}(\ms{F}\|\rho)$, our result would recover the finding of~\cite{lami2024asymptotic}. 

More specifically, let $r=\frac{m}{n}$ in our result, where $m$ is a fixed integer representing the yield of $m$ copy of two-qubit maximally entangled states $\Phi_2^{\ox m}$. The distillation rate $r$ approaches zero as the number of copies $n$ increases.
    \begin{align}
        &H_{r\to 0}^\infty(\rho\|\ms{F})\nonumber\\
        =&\lim_{n \to \infty}\frac{1}{n}\inf_{\sigma_n\in\ms{F}_n}\sup_{\alpha\in(0,1)}\frac{\alpha-1}{\alpha}(n\frac{m}{n}-D_{\Petz,\alpha}(\rho^{\ox n}\|\sigma_n))\nonumber\\
        \leq&\lim_{n \to \infty}\frac{1}{n}\inf_{\sigma_n\in\ms{F}_n}\sup_{\alpha\in(0,1)}\frac{\alpha-1}{-\alpha}D_{\Petz,\alpha}(\rho^{\ox n}\|\sigma_n)\nonumber\\
        &+\lim_{n \to \infty}\frac{1}{n}\sup_{\alpha\in(0,1)}\frac{\alpha-1}{\alpha}m\nonumber\\
        =&\lim_{n \to \infty}\frac{1}{n}\inf_{\sigma_n\in\ms{F}_n}\sup_{\alpha\in(0,1)}\frac{1}{-\alpha}\log\tr((\rho^{\ox n})^{\alpha}\sigma_n^{1-\alpha})\nonumber\\
        =&\lim_{n \to \infty}\frac{1}{n}\inf_{\sigma_n\in\ms{F}_n}\sup_{\tilde{\alpha}\in(0,1)}\frac{1}{\tilde{\alpha}-1}\log\tr((\rho^{\ox n})^{(1-\tilde{\alpha})}\sigma_n^{\tilde{\alpha}})\nonumber\\
        =&\lim_{n \to \infty}\frac{1}{n}\inf_{\sigma_n\in\ms{F}_n}D(\sigma_n\|\rho^{\ox n})\nonumber\\
        =&\lim_{n \to \infty}\frac{1}{n}D(\ms{F}_n\|\rho^{\ox n})\nonumber\\
        =&D(\ms{F}_1\|\rho).\nonumber
    \end{align}
    The first equality follows from substituting $\{\rho^{\ox n}\}_{n\in\NN}$ for $\ms{R}=\{\ms{R}_n\}_{n\in\NN}$ and setting $r=\frac{m}{n}$ for fixed $m\in \NN$ in definition $H_r^\infty(\ms{R}\|\ms{F})$. The third line holds since the second term is zero. The fourth line is obtained by taking $\alpha=1-\tilde{\alpha}$. The fifth line holds because the Petz \Renyi divergence is monotonically non-decreasing in the parameter $\alpha$, and reduces to the Umegaki relative entropy in the limit as $\alpha$ tends to 1. The last is due to the additive property of the reverse relative entropy of the entanglement~\cite{lami2024asymptotic}.

    As presented above, taking the limit $r\to 0$ in our result immediately shows that the error exponent is upper bounded by $D(\ms{F}_1\|\rho)$. If the regularized Petz \Renyi divergence is continuous at $\alpha=1$, i.e., $\sup_{\alpha\in (0,1)}D^{\infty}_{\Petz,\alpha}(\ms{F}\|\rho)=D^{\infty}(\ms{F}\|\rho)$, our framework will reproduce the result in~\cite{lami2024asymptotic} as a special case:
    \begin{align}
        &D(\ms{F}_1\|\rho)\nonumber\\
        =&D^{\infty}(\ms{F}\|\rho)\nonumber\\
        =&\sup_{\alpha\in (0,1)}D^{\infty}_{\Petz,\alpha}(\ms{F}\|\rho)\nonumber\\
        =&\sup_{\alpha\in (0,1)}D^{\infty}_{\Petz,1-\alpha}(\ms{F}\|\rho)\nonumber\\
        =&\sup_{\alpha \in (0,1)} \frac{\alpha - 1}{-\alpha}    \lim_{n\to\infty}\frac{1}{n}\inf_{\sigma_n\in\ms{F}_n}\frac{1}{\alpha-1}\log\tr((\rho^{\ox n})^{\alpha}\sigma_n^{1-\alpha})\nonumber\\
        =&\sup_{\alpha \in (0,1)} \frac{\alpha - 1}{\alpha} \left(  - \lim_{n\to\infty}\frac{1}{n}\inf_{\sigma_n\in\ms{F}_n}D_{\mathrm{P}, \alpha} (\rho^{\ox n} \| \sigma_n) \right).\nonumber
    \end{align}
    The first equation is due to the additive property of the reverse relative entropy of the entanglement~\cite{lami2024asymptotic}. The second equality uses the condition that the regularized Petz \Renyi divergence is continuous at $\alpha=1$. The last equation corresponds to $\mathfrak{H}_{r}^{\infty}(\rho \| \mathcal{F}) := \sup _{\alpha \in (0,1)} \frac{\alpha - 1}{\alpha} \left( r - D_{\Petz, \alpha}^{\infty}(\rho \| \mathcal{F}) \right)$ in the limit of vanishing rate $r\to 0$.
    Since $\mathfrak{H}_{r}^{\infty}(\rho \| \ms{F})\leq H_{r}^{\infty}(\rho \| \ms{F})$ established in \cite{fang2025errorexponentsquantumstate}, we obtain
    \begin{align}
        D(\ms{F}_1\|\rho)=\mathfrak{H}_{r\to 0}^{\infty}(\rho \| \ms{F})\leq H_{r\to 0}^{\infty}(\rho \| \ms{F})\leq D(\ms{F}_1\|\rho).\nonumber
    \end{align}
    As a result, we would recover the finding of~\cite{lami2024asymptotic}, given the assumption that $\sup_{\alpha\in (0,1)}D^{\infty}_{\Petz,\alpha}(\ms{F}\|\rho)=D^{\infty}(\ms{F}\|\rho)$. See more discussions in Section~\ref{sec: conclusion}.
\end{remark}
    
\subsection{The Strong Converse Exponent}
Returning to the black-box scenario, let $\ms{R}=\{\ms{R}_n\}_{n\in\NN}$ be a stable sequences of convex compact sets of quantum states with each $\ms{R}_n\subseteq\density(\mathcal{H}_{AB}^{\ox n})$. When the distillation rate $r$ exceeds the distillable entanglement, the error of entanglement distillation no longer tends to 0 but instead converges inevitably to 1. To investigate the asymptotic behavior in this regime, we define the fidelity of entanglement distillation in the black-box setting,
\begin{align}
    \fid(\ms{R}_n\xrightarrow{\mathcal{F}_{\NE}} \Phi_2^{\otimes \lfloor rn \rfloor}):=\max_{\Lambda_n\in\mathcal{F}_{\NE}} \min_{\rho_n\in\ms{R}_n}F(\Lambda_n(\rho_n),\Phi_2^{\otimes \lfloor rn \rfloor}).\nonumber
\end{align}
The strong converse exponent for entanglement distillation in black-box setting is defined as the convergence exponent of the fidelity of entanglement distillation,
\begin{align}
    E^{\sce}(\ms{R},\mathcal{F}_{\NE},r):=\liminf_{n\to\infty}-\frac{1}{n}\log\fid\left(\ms{R}_n\!\xrightarrow{\mathcal{F}_{\NE}}\! \Phi_2^{\otimes \lfloor rn \rfloor}\right).
\end{align}

A direct observation from the definition is that the fidelity and the error of entanglement distillation in the black-box setting satisfy the relation
\begin{align}
\fid\left(\ms{R}_n\xrightarrow{\mathcal{F}_{\NE}} \Phi_2^{\otimes \lfloor rn \rfloor}\right)=1-\err\left(\ms{R}_n\xrightarrow{\mathcal{F}_{\NE}} \Phi_2^{\otimes \lfloor rn \rfloor}\right).\nonumber
\end{align}

Similar to the approach used in proving the reliability function, it is useful to establish a connection with hypothesis testing. Therefore, we consider the \emph{detection probability}~\cite{hayashi2025general} of $\ms{R}_n$ under the null hypothesis $\ms{F}_n$ at a level $2^{-\lfloor rn \rfloor}$, 
\begin{align}
    &1-\beta_n(\ms{R}_n,2^{-\lfloor rn \rfloor})\nonumber\\
    =&\max_{\substack{0\leq M_n \leq\mathds{1}\\\max_{\sigma_n\in\ms{F}_n}\tr((\mathds{1}-M_n)\sigma_n)\leq 2^{-\lfloor rn \rfloor}}}\min_{\rho_n\in\ms{R}_n}
    (1-\tr(M_n\rho_n)).\nonumber
\end{align}

From Theorem~\ref{thm: error in finite blocklenth}, we immediately derive the following corollary.
\begin{corollary}
Let $\ms{R}_n$ be a convex compact set of quantum states on Hilbert space $\mathcal{H}_{AB}^{\ox n}$. Let $r>0$ be a real number, and the distillation protocol is restricted to non-entangling operations $\mathcal{F}_{\NE}$. Then the fidelity of entanglement distillation in the black-box setting can be related to the detection probability of hypothesis testing
    \begin{align}
         \fid\left(\ms{R}_n\xrightarrow{\mathcal{F}_{\NE}} \Phi_2^{\otimes \lfloor rn \rfloor}\right)=1-\beta_n\left(\ms{R}_n,2^{-\lfloor rn \rfloor}\right).
    \end{align}
\end{corollary}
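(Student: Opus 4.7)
The plan is to derive this corollary as an immediate two-step consequence of Theorem~\ref{thm: error in finite blocklenth} together with the elementary complementarity between distillation fidelity and distillation error. First I would verify the identity $\fid(\ms{R}_n \xrightarrow{\mc{F}_{\NE}} \Phi_2^{\ox \lfloor rn \rfloor}) = 1 - \err(\ms{R}_n \xrightarrow{\mc{F}_{\NE}} \Phi_2^{\ox \lfloor rn \rfloor})$, which is already displayed in the paragraph immediately preceding the corollary. The check amounts to writing $\max_{\rho_n \in \ms{R}_n}(1 - F(\Lambda_n(\rho_n), \Phi_2^{\ox \lfloor rn \rfloor})) = 1 - \min_{\rho_n \in \ms{R}_n} F(\Lambda_n(\rho_n), \Phi_2^{\ox \lfloor rn \rfloor})$ for each fixed non-entangling channel $\Lambda_n$, and then propagating the minus sign through the outer optimization over $\Lambda_n \in \mc{F}_{\NE}$, which converts the $\min_{\Lambda_n}$ in the definition of $\err$ into the $\max_{\Lambda_n}$ in the definition of $\fid$.

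Second, I would substitute the exact finite-blocklength identity from Theorem~\ref{thm: error in finite blocklenth}, namely $\err(\ms{R}_n \xrightarrow{\mc{F}_{\NE}} \Phi_2^{\ox \lfloor rn \rfloor}) = \beta_n(\ms{R}_n, 2^{-\lfloor rn \rfloor})$, into the complementarity identity established in the first step. This substitution yields the claimed equality $\fid(\ms{R}_n \xrightarrow{\mc{F}_{\NE}} \Phi_2^{\ox \lfloor rn \rfloor}) = 1 - \beta_n(\ms{R}_n, 2^{-\lfloor rn \rfloor})$ directly, and matches the detection probability expression $1 - \beta_n(\ms{R}_n, 2^{-\lfloor rn \rfloor})$ written out just before the corollary.

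There is no real obstacle: the corollary is a cosmetic rephrasing of the finite blocklength relation, since the transformation from $\beta_n$ to $1 - \beta_n$ on the hypothesis-testing side uses exactly the same $a \mapsto 1 - a$ identity together with a max/min swap as the transformation from $\err$ to $\fid$ on the distillation side. The only point worth double-checking is that all the optimization ranges remain compatible across the two reformulations, in particular that the type-I constraint $\max_{\sigma_n \in \ms{F}_n} \tr((\mathds{1} - M_n)\sigma_n) \leq 2^{-\lfloor rn \rfloor}$ is carried over unchanged from the definition of $\beta_n$ into that of $1 - \beta_n$; this is immediate from the definitions and requires no additional argument.
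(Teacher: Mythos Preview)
Your proposal is correct and mirrors exactly the paper's own reasoning: the paper states the identity $\fid = 1 - \err$ as a direct observation immediately before the corollary and then asserts that the corollary follows immediately from Theorem~\ref{thm: error in finite blocklenth}. Your two-step argument simply makes this explicit, so there is nothing to add.
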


Applying the strong converse exponent for composite correlated hypotheses established in~\cite[Theorem 6.1]{fang2025errorexponentsquantumstate} yields the following results for entanglement distillation.
\begin{proposition}
    Let $\ms{R}=\{\ms{R}_n\}_{n\in\NN}$ be a stable sequences of convex compact sets of quantum states with each $\ms{R}_n\subseteq\density(\mathcal{H}_{AB}^{\ox n})$. Let $r>D^{\infty}(\ms{R}\|\ms{F})$ be a real number, and the distillation protocol is restricted to non-entangling operations $\mathcal{F}_{\NE}$. Then the strong converse exponent for entanglement distillation in the black-box setting is lower bounded by the regularized quantum Hoeffding anti-divergences
    \begin{align}
        E^{\sce}(\ms{R},\mathcal{F}_{\NE},r)\geq \sup_{\alpha>1}\frac{\alpha-1}{\alpha}(r-D_{\Sand,\alpha}^\infty(\ms{R}\|\ms{F})).
    \end{align}
    Furthermore, if $\{\ms{R}_n\}_{n\in\NN}=\{\rho_{AB}^{\ox n}\}_{n\in\NN}$ and $D^\infty_{\Sand,\alpha}(\ms{R}\|\ms{F})$ is differentiable in $\alpha$ for $\alpha\geq1$, then the strong converse exponent for entanglement distillation with complete information about the initial state is characterized by  the regularized quantum Hoeffding anti-divergences
    \begin{align}
        E^{\sce}(\rho,\mathcal{F}_{\NE},r)= \sup_{\alpha>1}\frac{\alpha-1}{\alpha}(r-D_{\Sand,\alpha}^\infty(\rho\|\ms{F})).
    \end{align}
\end{proposition}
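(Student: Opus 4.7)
The plan is to leverage the exact finite blocklength equivalence from Theorem~\ref{thm: error in finite blocklenth} together with the fidelity--error identity $\fid = 1-\err$, which reduces the strong converse exponent of entanglement distillation to the strong converse exponent of the corresponding composite correlated hypothesis testing problem. Once this reduction is set up, both halves of the proposition follow by directly invoking Theorem~6.1 of~\cite{fang2025errorexponentsquantumstate}; the remaining work is almost entirely bookkeeping.

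For the lower bound, I would first rewrite
\begin{align}
    E^{\sce}(\ms{R},\mathcal{F}_{\NE},r) = \liminf_{n\to\infty} -\frac{1}{n}\log\bigl(1-\beta_n(\ms{R}_n,2^{-\lfloor rn\rfloor})\bigr),\nonumber
\end{align}
which is precisely the strong converse exponent of composite correlated hypothesis testing with type-I error constraint $2^{-\lfloor rn\rfloor}$ on the stable free set $\ms{F}$. The hypothesis testing theorem of~\cite{fang2025errorexponentsquantumstate} then supplies the lower bound $\sup_{\alpha>1}\tfrac{\alpha-1}{\alpha}(r - D^{\infty}_{\Sand,\alpha}(\ms{R}\|\ms{F}))$ directly. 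The only mismatch with the cited statement is the appearance of $\lfloor rn\rfloor$ rather than $rn$; since $\lfloor rn\rfloor \leq rn$, our constraint is at least as stringent, which can only enlarge the exponent, so the cited bound transfers to our setting without modification. The assumption $r > D^{\infty}(\ms{R}\|\ms{F})$ places us above the Stein threshold, ensuring that the supremum is strictly positive and that the strong converse regime is indeed the relevant one.

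For the matching upper bound in the i.i.d.\ case $\ms{R}_n = \rho^{\ox n}$, I would invoke the second half of the cited hypothesis testing theorem, which asserts that under differentiability of $\alpha \mapsto D^{\infty}_{\Sand,\alpha}(\rho\|\ms{F})$ on $[1,\infty)$, the strong converse exponent of the testing problem is \emph{equal} to $\sup_{\alpha>1}\tfrac{\alpha-1}{\alpha}(r - D^{\infty}_{\Sand,\alpha}(\rho\|\ms{F}))$. The rounding from $rn$ to $\lfloor rn\rfloor$ perturbs $r$ inside the supremum by at most $1/n$ and therefore contributes only a vanishing correction to the asymptotic exponent, so the exact characterization on the hypothesis testing side transfers verbatim to $E^{\sce}(\rho,\mathcal{F}_{\NE},r)$.

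The main obstacle to this argument is entirely upstream, hidden inside Theorem~6.1 of~\cite{fang2025errorexponentsquantumstate}: the achievability half of the optimal strong converse exponent for composite correlated hypotheses is delicate, typically relying on a construction with universal permutation-symmetric states together with a sandwiched R\'enyi optimization, and it is at this step that the differentiability hypothesis on $D^{\infty}_{\Sand,\alpha}$ is consumed to convert a parametric family of tests into a tight exponent. Within the proposition itself, however, the distillation-specific tasks are limited to (i) converting the $\fid$-side problem to the $\beta_n$-side problem via the finite blocklength identity, (ii) matching the floor-rounded rate, and (iii) recognising the threshold $D^{\infty}(\ms{R}\|\ms{F})$ as the Stein exponent on both sides; each of these is a short and direct consequence of the structure already established in the paper.
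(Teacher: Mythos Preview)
Your approach matches the paper's, which simply states the proposition as a direct consequence of the corollary relating $\fid$ to $1-\beta_n$ together with Theorem~6.1 of~\cite{fang2025errorexponentsquantumstate}, and gives no further details. One correction to your bookkeeping, however: the floor argument for the lower bound runs in the wrong direction. Since $\lfloor rn\rfloor \leq rn$ implies $2^{-\lfloor rn\rfloor} \geq 2^{-rn}$, the type-I constraint with the floor is \emph{less} stringent, so the detection probability $1-\beta_n$ can only be larger and the exponent $-\tfrac{1}{n}\log(1-\beta_n)$ smaller; thus the cited lower bound at rate $r$ does not transfer by monotonicity alone. The easy fix is exactly the $O(1/n)$ perturbation you already invoke for the upper bound: for any $r'<r$ one eventually has $\lfloor rn\rfloor \geq r'n$, so the floor constraint is at least as stringent as the rate-$r'$ constraint, yielding $E^{\sce}(\ms{R},\mathcal{F}_{\NE},r) \geq \sup_{\alpha>1}\tfrac{\alpha-1}{\alpha}\bigl(r'-D^{\infty}_{\Sand,\alpha}(\ms{R}\|\ms{F})\bigr)$, and then let $r'\uparrow r$ using continuity of the right-hand side in $r$.
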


\subsection{Error Exponents across Different Free Operations}
Within the finite blocklength framework, the proof approach of Theorem~\ref{thm: error in finite blocklenth} can be extended directly to PPT states and PPT-preserving operations upon replacing separable states and non-entangling operations. In addition, PPT states satisfy the conditions required for composite correlated hypothesis testing. Consequently, the reliability  function for entanglement distillation under PPT-preserving operations can be derived.

\begin{example}[PPT states and PPT-preserving operations]
    \label{ex:PPT states and PPT-preserving operations}
    A state $\sigma$ on Hilbert space $\mathcal{H}_{AB}$ is called the positive-partial-transpose (PPT) state if it satisfies $\sigma_{AB}^\Gamma\geq 0$,  where $\Gamma$ denotes the partial transpose. The set of PPT states is denoted by $\mathrm{PPT}$. The class of positive-partial-transpose-preserving (PPT-preserving) operations is denined as $\mathcal{F}_{\mathrm{PPT}}:=\{\Lambda\in \mathrm{CPTP}:\Lambda(X)\in \mathrm{PPT},\forall X\in\mathrm{PPT}\}$.

    The validity of the proof of Theorem~\ref{thm: error in finite blocklenth} is maintained under the substitution of PPT states for separable states and PPT-preserving operations $\mathcal{F}_{\mathrm{PPT}}$ for non-entangling operations $\mathcal{F}_{\mathrm{NE}}$. Then we get the error of entanglement distillation under PPT-preserving operations corresponds precisely to the Type-II error of hypothesis testing with null hypothesis $\mathrm{PPT}$ and alternative hypothesis $\ms{R}$,
    \begin{align}
        \varepsilon(\ms{R}_n\xrightarrow{\mathcal{F}_{\mathrm{PPT}}}\Phi_2^{\ox \lfloor rn\rfloor})=\beta_n(\ms{R}_n,\mathrm{PPT},2^{-\lfloor rn \rfloor}).
    \end{align}
    The reliability function of entanglement distillation under PPT-preserving operations is characterized by a regularized quantum Hoeffding divergence
    \begin{align}
        E^{\rf}(\ms{R},\mathcal{F}_{\mathrm{PPT
        }},r)=H_r^\infty(\ms{R}\|\mathrm{PPT}).
    \end{align}
    \begin{remark}
        Consider the distillation rate $r=m/n$ for some fixed integer $m$ and the resource sequence $\{\ms{R}_n\}_{n\in\NN}=\{\rho_{AB}^{\ox n}\}_{n\in\NN}$. Applying the approach of ~\cite{lami2024asymptotic} or Theorem 10 of ~\cite{hayashi2025general} to the known initial state $\rho_{AB}$ in the vanishing rate limit, we obtain that the error exponent is characterized as the single-letter divergence $D(\mathrm{PPT}\|\rho)$ under PPT-preserving operations. This achieves a result analogous to that in~\cite{lami2024asymptotic}.
    \end{remark}
\end{example}

Except for the constraint on state evolution in the Schr\"odinger picture, specifically non-entangling and PPT-preserving operations, we additionally impose constraints on the evolution of measurement operators in the Heisenberg picture, thereby defining the dually non-entangling and dually PPT-preserving operations. 
\begin{example}[Separable states and dually non-entangling operations]
\label{ex: DNE}
    A channel is called dually non-entangling if it is non-entangling in the Schrödinger picture, i.e., it maps separable states to separable states, and its adjoint is non-entangling in the Heisenberg picture, i.e., it maps separable measurement operators to separable operators. The class of dually non-entangling operations is denoted as $\mathcal{F}_{\mathrm{DNE}}:=\{\Lambda\in\mathrm{CPTP}:\Lambda(\sigma)\in\ms{F},\forall \sigma\in\ms{F},\text{and}~ \Lambda^\dagger(M)\in\mathrm{cone}(\ms{F}),\forall M\in\mathrm{cone}(\ms{F})\},$
    where $\mathrm{cone}(\ms{F})$ is the cone generated by the set of separable states $\ms{F}$. 

    To address this scenario, we formulate a hypothesis testing framework where only separable measurements $\mathds{SEP}$ are allowed. Adapting the proof of Theorem~\ref{thm: error in finite blocklenth} to the present case necessitates two adjustments. First, the construction of the distillation protocol in Eq.~\eqref{eq: construction of distillation protocol in black-box} remains unchanged. Its adjoint map
    \begin{align}
        \Lambda_n^\dagger(Y)=\tr&(\Phi_2^{\ox \lfloor rn\rfloor}Y)(\mathds{1}-M_n)\nonumber\\
        &+\frac{1}{(2^{\lfloor rn\rfloor})^2-1}\tr((\mathds{1}-\Phi_{2}^{\otimes \lfloor rn\rfloor})Y)M_n\nonumber
    \end{align}
    preserves separability because the measurement operators $\{\mathds{1}-M_n, M_n\}$ are separable. Second, the construction of the measurement in Eq.~\eqref{eq: construction of measurement in black-box} is modified to 
    \begin{align}
    M_n:=&\frac{2^{\lfloor rn \rfloor}}{2^{\lfloor rn \rfloor}+1}\Lambda_n^\dagger(\mathds{1}-\Phi_2^{\otimes \lfloor rn\rfloor}),\nonumber\\
    \mathds{1}-M_n:=&\ \Lambda_n^\dagger(\Phi_2^{\otimes \lfloor rn\rfloor}+\frac{\mathds{1}-\Phi_2^{\otimes \lfloor rn\rfloor}}{2^{\lfloor rn \rfloor}+1}).\nonumber
    \end{align}
    Because $\mathds{1}-\Phi_2^{\otimes \lfloor rn\rfloor}$ and $\Phi_2^{\otimes \lfloor rn\rfloor}+\frac{1}{2^{\lfloor rn \rfloor}+1}(\mathds{1}-\Phi_2^{\otimes \lfloor rn\rfloor})$ are separable and channel $\Lambda_n$ belongs to $\mathcal{F}_{\mathrm{DNE}}$, the above measurement is a separable measurement. Verifying the performance of this measurement in hypothesis testing requires only a straightforward calculation.
    
    When the sequence of initial state sets $\{\ms{R}_n\}_{n\in \NN}$ is compatible with separable measurements and the distillation rate approaches zero, by applying Theorem 16 of~\cite{brandao2020adversarial}, we obtain the error exponent is given by 
    \begin{align}
        D^{\mathds{SEP},\infty}(\ms{F}\|\rho):=\lim_{n\to\infty}\frac{1}{n}D^{\mathds{SEP}}(\ms{F}_n\|\rho^{\ox n}),
    \end{align}
    where 
    \begin{align}
        D^{\mathds{SEP}}(\sigma\|\rho):=\sup_{\{M_x\}_x\in\mathds{SEP}}\sum_x\tr(M_x\sigma)\log\frac{\tr(M_x\sigma)}{\tr(M_x\rho)}.\nonumber
    \end{align}
\end{example}
\begin{example}[PPT states and dually PPT-preserving operations]
    A channel is dually PPT-preserving  if it preserves PPT states in the Schr\"odinger picture and its adjoint preserves PPT measurement operators in the Heisenberg picture. We denote the class of dually PPT-preserving operations by $\mathcal{F}_{\mathrm{DPPT}}:=\{\Lambda\in\mathrm{CPTP}:\Lambda(\sigma)\in\mathrm{PPT},\forall \sigma\in\mathrm{PPT}, \text{and}~ \Lambda^\dagger(M)\in\mathrm{cone}(\mathrm{PPT}),\forall M\in\mathrm{cone}(\mathrm{PPT})\},$
    where $\mathrm{cone}(\mathrm{PPT})$ is the cone generated by the set of PPT states.

    Analogous to Example~\ref{ex: DNE}, this is resolved by formulating a hypothesis testing framework with PPT measurements $\mathds{PPT}$ to discriminate between resource states $\ms{R}_n$ and PPT states. When the sequence of initial state sets $\{\ms{R}_n\}_{n\in\NN}$ is compatible with PPT measurements, the distillation operations are restricted to be dually PPT-preserving and the distillation rate approaches zero, then the error exponent for entanglement distillation is characterized by 
    \begin{align}
        D^{\mathds{PPT},\infty}(\mathrm{PPT}\|\rho):=\lim_{n\to\infty}\frac{1}{n}D^{\mathds{PPT}}(\mathrm{PPT}_n\|\rho^{\ox n}),
    \end{align}
    where 
    \begin{align}
        D^{\mathds{PPT}}(\sigma\|\rho):=\sup_{\{M_x\}_x\in\mathds{PPT}}\sum_x\tr(M_x\sigma)\log\frac{\tr(M_x\sigma)}{\tr(M_x\rho)}.\nonumber
    \end{align}
\end{example}

\subsection{Optimal Distillation Protocol}
In this section, we develop a concrete distillation protocol, based on the information-spectrum method, for the case of complete prior knowledge of the initial entangled state $\rho_{AB}$. This protocol achieves the optimal convergence exponent given in \eqref{eq: reliablity function of single point}.

The construction of the explicit distillation protocol relies on the following lemma, which is adapted from Lemma 1 of~\cite{hayashi2016correlation}.
\begin{lemma}
\label{lem: universal free state}
 In the Hilbert space $\mathcal{H}_{A_1}\ox\mathcal{H}_{B_1}\ox\cdots\ox\mathcal{H}_{A_n}\ox\mathcal{H}_{B_n}$, where $\mathcal{H}_{A_i}=\mathcal{H}_A$ and $\mathcal{H}_{B_i}=\mathcal{H}_B$ for each $i$, there exists a universal permutation invariant product state $\omega^{(n)}_{A^n:B^n}$, such that for any permutation invariant separable state ${\sigma}_{A^n:B^n}$, we have 
 ${\sigma}_{A^n:B^n}\leq g_{n,|A|,|B|}~ \omega^{(n)}_{A^n:B^n}$
 where $g_{n,|A|,|B|}\leq(n+1)^{|A|^2+|B|^2-2}$. The above separable partition is $A_1...A_n:B_1...B_n$, and the permutation acts on the $n$ subspaces $A_1B_1:...:A_nB_n$.
\end{lemma}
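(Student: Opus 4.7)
The plan is to construct $\omega^{(n)}_{A^n:B^n}$ as the tensor product $\omega^{(n)}_A\ox\omega^{(n)}_B$ of universal permutation-invariant states on the two sides, and to bound any permutation-invariant separable state by it through a Schur--Weyl analysis of the diagonal $S_n$ action. This follows the strategy of Lemma~1 in~\cite{hayashi2016correlation} with the extension from product states to the broader class of separable states.

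First, for each $X\in\{A,B\}$ I construct $\omega^{(n)}_X$ on $\mathcal{H}_X^{\ox n}$ as a suitably weighted sum of normalized isotypic projections in the Schur--Weyl decomposition $\mathcal{H}_X^{\ox n}=\bigoplus_\lambda\mathcal{U}^X_\lambda\ox\mathcal{V}^X_\lambda$, where $\lambda$ runs over Young diagrams with $n$ boxes and at most $|X|$ rows. Each $\omega^{(n)}_X$ is permutation-invariant on $\mathcal{H}_X^{\ox n}$, so the product $\omega^{(n)}_{A^n:B^n}:=\omega^{(n)}_A\ox\omega^{(n)}_B$ is a product state across the bipartition $A^n:B^n$ and is invariant under the joint diagonal permutation $U^A_\pi\ox U^B_\pi$ that permutes the pairs $(A_i,B_i)$ simultaneously.

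Second, I reduce the desired bound to symmetrized pure product terms. Taking a pure-state separable decomposition of $\sigma$ and exploiting the assumed joint permutation invariance, I rewrite $\sigma=\sum_k p_k \tau_k$ with
\begin{align}
\tau_k:=\frac{1}{n!}\sum_{\pi\in S_n}(U^A_\pi\ox U^B_\pi)\,|\phi_k\ox\psi_k\>\<\phi_k\ox\psi_k|\,(U^A_\pi\ox U^B_\pi)^\dagger.\nonumber
\end{align}
It thus suffices to establish $\tau_k\leq g_{n,|A|,|B|}\,\omega^{(n)}_A\ox\omega^{(n)}_B$ for each $k$, and then sum against the convex weights $p_k$.

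Third, I bound $\tau_k$ block-wise via the joint Schur--Weyl decomposition. By Schur orthogonality applied to the diagonal $S_n$-action on $\mathcal{V}^A_\lambda\ox\mathcal{V}^B_\mu$, the symmetrized term has nonvanishing components only in blocks with coinciding Young diagrams $\lambda=\mu$, and within each such block its $\mathcal{V}$-factor is supported on the one-dimensional $S_n$-invariant subspace. The residual contribution on the $\mathcal{U}$-side is then controlled in operator order by $\dim\mathcal{U}^A_\lambda\cdot\dim\mathcal{U}^B_\lambda$; comparing block by block against $\omega^{(n)}_A\ox\omega^{(n)}_B$ and invoking the standard estimates $|\{\lambda\}|\leq(n+1)^{|X|-1}$ on the number of Young diagrams and $\dim\mathcal{U}^X_\lambda\leq(n+1)^{|X|(|X|-1)/2}$ on the $GL$-irrep dimensions yields the claimed polynomial factor $g_{n,|A|,|B|}\leq(n+1)^{|A|^2+|B|^2-2}$.

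The main obstacle is the third step. A naive independent symmetrization on the $A$ and $B$ sides would replace the correlated $\pi$-sum by a double $(\pi,\pi')$-sum and introduce an extra factorial factor $n!$ that destroys the polynomial bound. The essential point --- and precisely what makes a universal \emph{product} upper bound possible for separable permutation-invariant states --- is that the joint $\pi$-average collapses each $\mathcal{V}^A_\lambda\ox\mathcal{V}^B_\lambda$ onto its one-dimensional invariant subspace, absorbing what would otherwise be an exponential multiplicity factor $\dim\mathcal{V}_\lambda$ and leaving only the polynomial $GL$-irrep contribution from the $\mathcal{U}$-side.
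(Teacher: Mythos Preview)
Your third step contains a concrete error. The assertion that the diagonally symmetrized pure product term $\tau_k$ ``has nonvanishing components only in blocks with coinciding Young diagrams $\lambda=\mu$'' is false. The averaging $\frac{1}{n!}\sum_\pi (U^A_\pi\ox U^B_\pi)(\cdot)(U^A_\pi\ox U^B_\pi)^\dagger$ projects onto the commutant of the diagonal $S_n$-action; it does \emph{not} confine the result to the $\lambda=\mu$ sector. Concretely, for $n=2$ and $|A|=|B|=2$, take $|\phi\rangle=|01\rangle_A$ and $|\psi\rangle=|00\rangle_B$: then $\tau_k=\tfrac12(|01\rangle\langle01|+|10\rangle\langle10|)_A\ox|00\rangle\langle00|_B$, whose $A$-factor has a nonzero antisymmetric component while the $B$-factor lies entirely in the symmetric block, so $\tau_k$ has support in the block $(\lambda,\mu)=((1,1),(2))$ with $\lambda\neq\mu$. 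Schur orthogonality only tells you that $\mathcal V_\lambda\ox\mathcal V_\mu$ contains the trivial $S_n$-representation iff $\lambda=\mu$; it does not control the support of an averaged rank-one \emph{operator}, which sits in the commutant and can occupy any $(\lambda,\mu)$ block. Your own final paragraph correctly warns that independent symmetrization over $S_n\times S_n$ would cost a factor $n!$, but the diagonal symmetrization does not produce the collapse to $\lambda=\mu$ that your argument requires either; so the dichotomy you set up does not resolve the problem.

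By contrast, the paper's proof is far more direct and attempts no Schur--Weyl bookkeeping at all: it simply invokes the single-system universal state of Lemma~1 in~\cite{hayashi2016correlation} on $\mathcal H_A^{\ox n}$ and on $\mathcal H_B^{\ox n}$ separately, sets $\omega^{(n)}_{A^n:B^n}:=\omega^{(n)}_{A^n}\ox\omega^{(n)}_{B^n}$ and $g_{n,|A|,|B|}:=g_{n,|A|}\,g_{n,|B|}\le(n+1)^{|A|^2+|B|^2-2}$, and states the conclusion. The paper is admittedly terse about why the tensor product of the two single-system bounds dominates every diagonally permutation-invariant separable state---and that is precisely the step you attempted to supply and where your Schur--Weyl route breaks down.
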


\begin{proof}
    Lemma 1 of~\cite{hayashi2016correlation} states that for any Hilbert space $\mathcal{H}_A$ and any $n\in\NN$, there exists a universal permutation invariant state $\omega_{A^n}^{(n)}$ on Hilbert space $\mathcal{H}_A^{\ox n}$, such that for any permutation invariant state $\eta_{A^n}$, we have
    \begin{align}
        \eta_{A^n}\leq g_{n,|A|}\omega^{(n)}_{A^n},\nonumber
    \end{align}
    where $g_{n,|A|}\leq(n+1)^{|A|^2-1}$.

    Therefore, there exist universal permutation invariant states $\omega_{A^n}^{(n)}$ and $\omega_{B^n}^{(n)}$ on Hilbert space $\mathcal{H}_A^{\ox n}$ and $\mathcal{H}_B^{\ox n}$ respectively, such that for any permutation invariant state $\eta_{A^n}\in \density^{inv}(\mathcal{H}_A^{\ox n})$ and $\eta_{B^n}\in\density^{inv}(\mathcal{H}_B^{\ox n})$, the relation 
    \begin{align}
        \eta_{A^n}\leq g_{n,|A|}\omega^{(n)}_{A^n},~\eta_{B^n}\leq g_{n,|B|}\omega^{(n)}_{B^n},\nonumber
    \end{align}
    holds, 
    where $g_{n,|A|}\leq(n+1)^{|A|^2-1}$ and $g_{n,|B|}\leq(n+1)^{|B|^2-1}$.
    
    Let $\omega^{(n)}_{A^n:B^n}=\omega_{A^n}^{(n)}\ox \omega_{B^n}^{(n)}$ and $g_{n,|A|,|B|}=g_{n,|A|}g_{n,|B|}$. We then arrive at the conclusion.
\end{proof}

\begin{lemma}[\cite{audenaert2007discriminating,audenaert2008asymptotic}]
\label{lem: famous inequality}
Let $A$ and $B$ be positive semi-definite operators, then for all $s\in (0,1)$, 
\begin{align}
    \tr(\{A\geq B\}B)+\tr(\{A<B\}A)\leq\tr A^sB^{1-s}.
\end{align}
\end{lemma}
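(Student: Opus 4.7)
The plan is to reduce the stated trace inequality to an equivalent one involving the trace norm, and then lift the scalar bound $\min(a,b)\le a^s b^{1-s}$ from commuting to noncommuting operators via resolvent calculus. This is broadly the route taken by Audenaert in his original proof, and I expect the same strategy to work here.

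Setting $P := \{A \ge B\}$ and $Q := I - P = \{A < B\}$, a direct calculation gives
\begin{align*}
\tr(P B) + \tr(Q A) &= \tr A - \tr(P(A - B)) \\
&= \tr A - \tr((A - B)_+) \\
&= \tfrac{1}{2}\bigl(\tr A + \tr B - \|A - B\|_1\bigr),
\end{align*}
where in the second step I used that $P$ is exactly the spectral projector of $(A-B)_+$, and the identity $\tr((A - B)_+) = \tfrac{1}{2}(\|A-B\|_1 + \tr(A-B))$ obtained from the Jordan decomposition $A-B = (A-B)_+ - (A-B)_-$. The target inequality is therefore equivalent to the symmetric bound $\tr(A + B - |A - B|) \le 2\tr(A^s B^{1-s})$ for all $s \in (0,1)$.

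When $[A, B] = 0$, simultaneous diagonalization reduces this to the pointwise scalar estimate $2\min(a,b) \le 2 a^s b^{1-s}$, which is the weighted AM--GM inequality. For the general noncommutative case, I would invoke the integral representation
\begin{equation*}
x^s = \frac{\sin(\pi s)}{\pi}\int_0^\infty \frac{x}{x+u}\,u^{s-1}\,du, \qquad x > 0,\ s \in (0,1),
\end{equation*}
to rewrite $A^s$ in terms of the operator-monotone resolvents $A(A+u)^{-1}$. Substituting into $\tr(A^s B^{1-s})$, exchanging trace with integration, and using L\"owner's theorem on the operator monotonicity of $x \mapsto x/(x+u)$, the claim reduces to a pointwise-in-$u$ resolvent estimate that, when integrated against the kernel $u^{s-1}$, recovers $\tfrac{1}{2}\tr(A+B-|A-B|)$.

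The main obstacle is producing a pointwise operator inequality that is tight enough, yet symmetric enough in $A$ and $B$, to integrate back to the trace-distance quantity above without introducing uncontrolled commutator terms of the form $[A,B]$ along the way. I anticipate the key technical step to be a resolvent-level Young-type estimate relating $(A+u)^{-1}$ and $(B+u)^{-1}$ to $|A-B|$, and its derivation---bypassing the commutativity assumption while remaining sharp at both endpoints $s \to 0^+$ and $s \to 1^-$---is likely to be the most delicate part of the argument.
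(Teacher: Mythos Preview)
The paper does not supply its own proof of this lemma; it merely states the inequality and cites Audenaert et al.\ \cite{audenaert2007discriminating,audenaert2008asymptotic}. There is therefore no in-paper argument to benchmark your proposal against.

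On the substance of your outline: the reduction step is correct. With $P=\{A\ge B\}$ one indeed has $\tr(PB)+\tr((I-P)A)=\tfrac12\bigl(\tr A+\tr B-\|A-B\|_1\bigr)$, so the lemma is equivalent to $\tr(A+B-|A-B|)\le 2\tr(A^sB^{1-s})$, and in the commuting case this collapses to $\min(a,b)\le a^s b^{1-s}$ as you say.

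Where your proposal is incomplete is precisely where you flag it: the passage from the scalar bound to the noncommuting case via the resolvent representation of $x^s$ is the entire content of the result, and you have not actually produced the pointwise-in-$u$ estimate that closes the argument. Operator monotonicity of $x\mapsto x/(x+u)$ alone does not hand you a two-sided inequality involving $|A-B|$; the original Audenaert proof requires a more delicate trace inequality at the resolvent level, and your sketch stops short of stating (let alone proving) what that inequality is. As written, this is a correct reformulation plus a plausible strategy, not a proof.

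If you want a route that avoids the resolvent analysis altogether, the later literature (e.g.\ the argument reproduced in Jak\v{s}i\'c--Ogata--Pillet--Seiringer, or Ozawa's short proof) proceeds by first showing $\tr\bigl((B-A)_+\bigr)\le \tr\bigl(B^{1-s}(B^s-A^s)_+\bigr)$ via operator monotonicity of $x\mapsto x^s$, and then iterating; this is considerably more elementary and would turn your outline into a complete proof in a few lines.
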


We now construct an entanglement distillation protocol that attains the optimal error exponent.
\begin{theorem}
    Let $\rho$ be an entangled state on Hilbert space $\mathcal{H}_{AB}$. The entanglement distillation protocol is constructed by
    \begin{align}
        \Lambda_n(X)\!=\!\tr((\mathds{1}-{M}_n)X)\Phi_2^{\ox \lfloor rn\rfloor}\!+\!\tr({M}_nX)\frac{\mathds{1}-\Phi_{2}^{\otimes \lfloor rn\rfloor}}{(2^{\lfloor rn\rfloor})^2-1},\nonumber
    \end{align}
    where ${M}_n=\{\rho^{\ox n}\leq2^{a_n}\omega^{(n)}_{A^n:B^n}\}$, i.e., the projection onto the positive part of the operator $2^{a_n}\omega^{(n)}_{A^n:B^n}-\rho^{\ox n}$ and 
    $$a_n=\frac{rn+(s-1)D_{\Petz,s}(\rho^{\ox n}\|\ms{F}_n)+\log g_{n,|A|,|B|}+1}{s},$$ for a fixed $s\in(0,1)$. Then the protocol is non-entangling and can achieve the reliability function of entanglement distillation with the initial state $\rho$ in Corollary \ref{cor: the reliability function of single point}.
\end{theorem}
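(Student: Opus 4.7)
The plan is to verify that the proposed $\Lambda_n$ is non-entangling and then bound its distillation error. Since $\Lambda_n$ has the same structural form as the achievability construction in Theorem~\ref{thm: error in finite blocklenth} (Eq.~\eqref{eq: construction of distillation protocol in black-box}), membership in $\mathcal{F}_{\NE}$ is equivalent to $\tr((\mathds{1}-M_n)\sigma_n)\leq 2^{-\lfloor rn\rfloor}$ for all $\sigma_n\in\ms{F}_n$, and the distillation infidelity equals $\tr(M_n\rho^{\ox n})$. Both quantities will be controlled via Audenaert's inequality (Lemma~\ref{lem: famous inequality}) applied with $A=\rho^{\ox n}$ and $B=2^{a_n}\omega^{(n)}_{A^n:B^n}$, which directly yields
\begin{align}
\tr(M_n\rho^{\ox n})&\leq 2^{(1-s)a_n}\tr\bigl[(\rho^{\ox n})^s(\omega^{(n)})^{1-s}\bigr],\nonumber\\
\tr\bigl((\mathds{1}-M_n)\omega^{(n)}\bigr)&\leq 2^{-sa_n}\tr\bigl[(\rho^{\ox n})^s(\omega^{(n)})^{1-s}\bigr].\nonumber
\end{align}

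To reach an arbitrary, possibly non-symmetric $\sigma_n\in\ms{F}_n$ in the Type-I bound, I would exploit that both $\rho^{\ox n}$ and $\omega^{(n)}$ are permutation invariant, so $M_n$ is too; hence $\tr((\mathds{1}-M_n)\sigma_n)=\tr((\mathds{1}-M_n)\bar\sigma_n)$ for the symmetrized $\bar\sigma_n$, which remains separable because the permutation action factors as $V_\pi^A\ox V_\pi^B$. Lemma~\ref{lem: universal free state} then gives $\bar\sigma_n\leq g_{n,|A|,|B|}\omega^{(n)}$, so
\begin{align}
\tr\bigl((\mathds{1}-M_n)\sigma_n\bigr)\leq g_{n,|A|,|B|}\tr\bigl((\mathds{1}-M_n)\omega^{(n)}\bigr).\nonumber
\end{align}
Simultaneously, since $\omega^{(n)}$ is itself separable, it serves as a witness in the infimum defining $D_{\Petz,s}(\rho^{\ox n}\|\ms{F}_n)$; using $s\in(0,1)$ so that $s-1<0$,
\begin{align}
\tr\bigl[(\rho^{\ox n})^s(\omega^{(n)})^{1-s}\bigr]\leq 2^{(s-1)D_{\Petz,s}(\rho^{\ox n}\|\ms{F}_n)}.\nonumber
\end{align}
Plugging the chosen $a_n$ into the two Audenaert bounds, the Type-I estimate collapses to $2^{-rn-1}\leq 2^{-\lfloor rn\rfloor}$, confirming $\Lambda_n\in\mathcal{F}_{\NE}$, while the error estimate becomes $\tr(M_n\rho^{\ox n})\leq 2^{\frac{1-s}{s}(rn-D_{\Petz,s}(\rho^{\ox n}\|\ms{F}_n)+\log g_{n,|A|,|B|}+1)}$.

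Taking $-\frac{1}{n}\log$ and letting $n\to\infty$, the correction $\log g_{n,|A|,|B|}=O(\log n)$ and the constant $+1$ vanish, so the exponent converges to $\frac{s-1}{s}(r-D_{\Petz,s}^{\infty}(\rho\|\ms{F}))$. Optimizing over $s\in(0,1)$ produces $\sup_{\alpha\in(0,1)}\frac{\alpha-1}{\alpha}(r-D_{\Petz,\alpha}^{\infty}(\rho\|\ms{F}))=H_r^{\infty}(\rho\|\ms{F})$, which by Corollary~\ref{cor: the reliability function of single point} is precisely the reliability function; the fixed parameter $s$ in the construction is to be chosen as the maximizer. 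The main obstacle is the dual role played by $\omega^{(n)}$: it must dominate every permutation-invariant separable state up to the factor $g_{n,|A|,|B|}$ (so that the test behaves properly on all $\sigma_n\in\ms{F}_n$ after symmetrization), yet simultaneously sit inside $\ms{F}_n$ so that it can be used as a concrete state bounding $D_{\Petz,s}(\rho^{\ox n}\|\ms{F}_n)$. Calibrating $a_n$ so that the polynomial factor $\log g_{n,|A|,|B|}$ and the floor discrepancy $rn-\lfloor rn\rfloor$ are both absorbed into sub-exponential corrections, without perturbing the asymptotic Hoeffding rate, is exactly where the explicit form of $a_n$ stated in the theorem becomes essential.
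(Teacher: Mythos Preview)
Your approach mirrors the paper's almost exactly: symmetrize $\sigma_n$ using the permutation invariance of $M_n$, dominate the symmetrized state by $g_{n,|A|,|B|}\,\omega^{(n)}$ via Lemma~\ref{lem: universal free state}, apply Audenaert's inequality (Lemma~\ref{lem: famous inequality}) to control both the Type-I and Type-II quantities, and exploit $\omega^{(n)}\in\ms{F}_n$ to pass from $\omega^{(n)}$ to the infimum defining $D_{\Petz,s}(\rho^{\ox n}\|\ms{F}_n)$. All of this is correct and matches the paper.

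The one genuine gap is your final asymptotic step. After taking $n\to\infty$ for fixed $s$ and then optimizing, you obtain $\sup_{\alpha\in(0,1)}\frac{\alpha-1}{\alpha}\bigl(r-D_{\Petz,\alpha}^{\infty}(\rho\|\ms{F})\bigr)$ and assert that this equals $H_r^{\infty}(\rho\|\ms{F})$. But these are a priori different objects: the former is $\mathfrak{H}_r^{\infty}(\rho\|\ms{F})$ (sup over $\alpha$ outside the regularization), and in general only $\mathfrak{H}_r^{\infty}\leq H_r^{\infty}$ holds, as the paper itself notes in Remark~\ref{rem: r to 0}. Interchanging $\sup_\alpha$ with the limit $n\to\infty$ and the infimum over $\sigma_n$ is exactly the non-trivial point. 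The paper handles this by invoking, \emph{before} taking the limit, the finite-$n$ minimax identity
\[
\sup_{s\in(0,1)}\inf_{\sigma_n\in\ms{F}_n}\tfrac{s-1}{s}\bigl(nr-D_{\Petz,s}(\rho^{\ox n}\|\sigma_n)\bigr)=\inf_{\sigma_n\in\ms{F}_n}\sup_{s\in(0,1)}\tfrac{s-1}{s}\bigl(nr-D_{\Petz,s}(\rho^{\ox n}\|\sigma_n)\bigr)
\]
from~\cite{Mosonyi_2022,fang2025errorexponentsquantumstate}, which identifies the right-hand side of your finite-$n$ bound (after taking $\sup_s$) with $\frac{1}{n}H_{n,r}(\rho^{\ox n}\|\ms{F}_n)$, and only then passes to the limit to reach $H_r^{\infty}$. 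You should replace your unqualified equality by this minimax swap; without it your argument only certifies the (possibly weaker) exponent $\mathfrak{H}_r^{\infty}(\rho\|\ms{F})$.
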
 

\begin{proof}
We first prove that $\Lambda_n$ is an non-entangling channel. Since the isotropic state $p\Phi_{2}^{\otimes m}+(1-p)\frac{\mathds{1}-\Phi_{2}^{\otimes m}}{(2^m)^2-1}$ is separable when $p\in[0,2^{-m}]$~\cite[p.~319]{watrous2018theory}, in order to have $\Lambda_n(\sigma_n)\in\ms{F}$, for any $ \sigma_n\in\ms{F}$, it suffices to ensure that $\tr(\{\rho^{\ox n}>2^{a_n}\omega^{(n)}_{A^n:B^n}\}\sigma_n)\leq2^{-\lfloor rn \rfloor}$. 
It is straightforward to observe 
\begin{align}
    &\{\rho^{\ox n}> 2^{a_n}\omega^{(n)}_{A^n:B^n}\}\nonumber\\
    =&\{U_{\pi_n}\rho^{\ox n}U_{\pi_n}^*> 2^{a_n}U_{\pi_n}\omega^{(n)}_{A^n:B^n}U_{\pi_n}^*\}\nonumber\\
    =&U_{\pi_n}\{\rho^{\ox n}> 2^{a_n}\omega^{(n)}_{A^n:B^n}\}U_{\pi_n}^*\nonumber
\end{align}
holds for any $\pi_n\in S_n$, and therefore
\begin{align}
    &\{\rho^{\ox n}> 2^{a_n}\omega^{(n)}_{A^n:B^n}\}\nonumber\\
    =&\frac{1}{|S_n|}\sum_{\pi_n\in S_n}U_{\pi_n}\{\rho^{\ox n}> 2^{a_n}\omega^{(n)}_{A^n:B^n}\}U_{\pi_n}^*.\nonumber
\end{align}
Then for any $\sigma_n\in\mathcal{F}$, we have
\begin{align}
    &\tr(\{\rho^{\ox n}> 2^{a_n}\omega^{(n)}_{A^n:B^n}\}\sigma_n)\nonumber\\
    =&\tr(\frac{1}{|S_n|}\sum_{\pi_n\in S_n}U_{\pi_n}\{\rho^{\ox n}> 2^{a_n}\omega^{(n)}_{A^n:B^n}\}U_{\pi_n}^*\sigma_n)\nonumber\\
    =&\tr(\{\rho^{\ox n}> 2^{a_n}\omega^{(n)}_{A^n:B^n}\}\frac{1}{|S_n|}\sum_{\pi_n\in S_n}U_{\pi_n}^*\sigma_nU_{\pi_n})\nonumber\\
    =&\tr(\{\rho^{\ox n}> 2^{a_n}\omega^{(n)}_{A^n:B^n}\}\tilde{\sigma}_{A^n:B^n}),\label{eq: link permutation-invariant separable state}
\end{align}
where $\tilde{\sigma}_{A^n:B^n}$ is a separable state invariant under permutations. Furthermore, applying Lemma~\ref{lem: universal free state}, the above Eq.~\eqref{eq: link permutation-invariant separable state} is upper bounded by
\begin{align}
    &\tr(\{\rho^{\ox n}> 2^{a_n}\omega^{(n)}_{A^n:B^n}\}g_{n,|A|,|B|}\omega^{(n)}_{A^n:B^n})\nonumber\\
    \leq& g_{n,|A|,|B|}2^{-a_n}\tr((\rho^{\ox n})^s(2^{a_n}\omega^{(n)}_{A^n:B^n})^{1-s})\nonumber\\
    =&g_{n,|A|,|B|}2^{-a_ns}2^{(s-1)D_{\Petz,s}(\rho^{\ox n}\|\omega^{(n)}_{A^n:B^n})}\nonumber\\
    \leq&g_{n,|A|,|B|}2^{-a_ns}2^{(s-1)D_{\Petz,s}(\rho^{\ox n}\|\mathcal{F}_n)},\nonumber
\end{align}
where the first inequality is due to Lemma~\ref{lem: famous inequality} for any fixed $s\in (0,1)$.
In order to achieve $\tr(\{\rho^{\ox n}>2^{a_n}\omega^{(n)}_{A^n:B^n}\}\sigma_n)\leq2^{-\lfloor rn \rfloor}$, it suffices to set  
\begin{align}
    \label{eq:an}
    a_n=\frac{rn+(s-1)D_{\Petz,s}(\rho^{\ox n}\|\mathcal{F}_n)+\log g_{n,|A|,|B|}+1}{s}.
\end{align}

Subsequently, we analyze the performance of the entanglement distillation protocol. 
\begin{align}
    &1-F(\Lambda_n(\rho^{\otimes n}),\Phi_{2}^{\otimes \lfloor rn\rfloor})\nonumber\\
    =&\tr(\{\rho^{\ox n}\leq2^{a_n}\omega^{(n)}_{A^n:B^n}\}\rho^{\ox n})\nonumber\\
    \leq&\tr((\rho^{\ox n})^s(2^{a_n}\omega^{(n)}_{A^n:B^n})^{1-s})\nonumber\\
    =&2^{(1-s)a_n}2^{(s-1)D_{\Petz,s}(\rho^{\ox n}\|\omega^{(n)}_{A^n:B^n})}\nonumber\\
    \leq&2^{(1-s)a_n}2^{(s-1)D_{\Petz,s}(\rho^{\ox n}\|\mathcal{F}_n)},\nonumber
\end{align}
where the first equality follows from a direct computation, and the first inequality is due to Lemma~\ref{lem: famous inequality} for any fixed $s\in (0,1)$. Substituting Eq.~\eqref{eq:an}, taking the logarithm of both sides, dividing by $n$, and multiplying by $-1$, we obtain 
\begin{align}
    &-\frac{1}{n}\log(1-F(\Lambda_n(\rho^{\otimes n}),\Phi_{2}^{\otimes \lfloor rn\rfloor}))\nonumber\\
    \geq&\frac{1}{n}(\frac{s-1}{s}(nr\!-\!D_{\Petz,s}(\rho^{\ox n}\|\mathcal{F}_n)\!)\!)\!+\!\frac{s-1}{ns}(\log g_{n,|A|,|B|} \!+\!1)\label{eq:achieve distillation error exponent}.
\end{align}
As established in \cite{Mosonyi_2022}\cite{fang2025errorexponentsquantumstate},
\begin{align}
    &\sup_{s\in(0,1)}\inf_{\sigma_n\in\ms{F}_n}(\frac{s-1}{s}nr-\frac{s-1}{s}D_{\Petz,s}(\rho^{\ox n}\|\sigma_ n))\nonumber\\
    =&\inf_{\sigma_n\in\ms{F}_n}\sup_{s\in(0,1)}(\frac{s-1}{s}nr-\frac{s-1}{s}D_{\Petz,s}(\rho^{\ox n}\|\sigma_ n))\label{eq:minmaxswap}.
\end{align}
Substituting Eq.~\eqref{eq:minmaxswap} 
into Eq.~\eqref{eq:achieve distillation error exponent} and taking the limit on both sides, we explicitly develop an distillation protocol that achieves the error exponent stated in Corollary \ref{cor: the reliability function of single point}. 
\end{proof}

\section{Conclusion and Discussion}
\label{sec: conclusion}
Our black-box extension of the conventional entanglement distillation is a crucial step towards modelling realistic scenario, accounting for both environmental uncertainty and non‑i.i.d. correlations. In particular, we establish an exact equivalence, free of redundant correction terms, between the error of entanglement distillation in the black-box setting and the optimal type-II error in composite correlated hypothesis testing. Therefore, we obtain the precise reliability function for entanglement distillation in the black-box setting under the classes of non-entangling and PPT-preserving operations. 
When the initial state is pure, our result reduces to the formula of Hayashi et al.~\cite{hayashi2002error}. We also derive the error exponent for dually non-entangling and dually PPT-preserving operations in the limit of vanishing distillation rate. Given complete knowledge of the initial state, we construct an explicit non-entangling protocol that achieves the reliability function. Additionally, a lower bound for the strong converse exponent of entanglement distillation is provided in the black-box setting. 

Investigating the continuity of the regularized Petz \Renyi relative entropy at $\alpha=1$ is an interesting question, as it relates to recovering the result of Lami et al.~\cite{lami2024asymptotic} and to the distillable entanglement in the black-box setting. Although~\cite{fang2025errorexponentsquantumstate} provides counterexamples demonstrating the breakdown of continuity in the general case, no explicit counterexample has been identified for scenarios where the first component constitutes sequences of separable state sets, which is the case of interest in this work. Notably, the antisymmetric Werner state---the universal counterexample in entanglement theory---remains continuous under this configuration~\cite{fang2025errorexponentsquantumstate}. Moreover, extending the study of the reliability function to the distillation of multipartite entanglement and other resources presents an interesting direction for future work.

\section*{Acknowledgment}

Z.L. is supported by the National Natural Science Foundation of China (Grants No. 62571166, No. 12031004, No. 12371138 and No. W2441002).
K.L. is supported by the National Natural Science Foundation of China (Grants No. 62571166 and No. 12031004).

\end{document}